\newcommand{\journal}{0}
\newtheorem{theorem}{\bf Theorem}
\newtheorem{lemma}[theorem]{\bf Lemma}
\newtheorem{assumption}{Assumption}
\newtheorem{definition}{Definition}
\def\QED{~\rule[-1pt]{5pt}{5pt}\par\medskip}
\renewenvironment{proof}{{\bf Proof: \ }}{ \hfill \QED}
\newcommand{\mytest}[2]{%
  \ifboolexpr{test {\ifnumcomp{\journal}{=}{1}}}
    {#1}
    {#2}
    }
\let\ALP  \mathcal
\newcommand{\beq}[1]{\begin{align} #1 \end{align}}
\newcommand{\beqq}[1]{\begin{align*} #1 \end{align*}}
\renewcommand{\Re}{\mathbb{R}}
\newcommand{\ex}[1]{\mathds{E}\left[#1\right]}
\newcommand{\pr}[1]{\mathds{P}\left\{#1\right\}}
\title{Auctioning Electricity under Deep Renewable Integration using a Penalty for Shortfall}
\author{Balsam Dakhil \and Abhishek Gupta \thanks{Balsam Dakhil and Abhishek Gupta are with the Department of Electrical and Computer Engineering at The Ohio State University, Columbus, OH, USA. Email:  {\tt\small dakhil.2@osu.edu, gupta.706@osu.edu}. The authors would like to thank NSF ECCS Grant 1610615 for supporting this research.}}
\begin{document}
\maketitle
\thispagestyle{plain}
\pagestyle{plain}
\begin{abstract} 
We analyze the problem of a renewable generator who wants to sell its random generation in a two-stage market to a number of strategic, but flexible, load serving entities (LSEs). To offer an incentive to participate in the auction, the generator promises to pay a penalty associated with any shortfall in generation due to the uncertain nature of the traded resource. We devise an auction mechanism that efficiently allocates electricity among LSEs while eliciting their true valuations. We show that the consumer surplus of the LSEs and expected profit of the generator are positive, thereby creating a win-win situation for all market participants.
\end{abstract}

\section{Introduction}
Renewable energy presents a clean, economic, and environment friendly alternative to traditional sources of energy. It is also thought of as a sustainable source that can solve, or at least hedge against, any possible energy crisis that might happen due to limited fossil fuel reserves. Yet, as much benefits as renewables bring to the world, integrating renewable energy in the existent power grid introduces new challenges to the engineering and operation of the  power system \cite{bird2013integrating}. One major issue that requires attention is the need to design a trading institute that can accommodate for the variability and uncertainty of supply associated with renewable energy generation. The current electricity market is designed for supplying the electricity demanded using fossil fuel and nuclear energy based generators. Existing markets can easily absorb renewable generation at small quantities. However, when the random renewable sources provide substantial portion of the demand, then the existing market structure needs to be changed to ensure reliability and smooth operations of the power grid.

Accordingly, in this paper, we devise an auctioning format for renewable generation that builds on the current structure of electricity market. We consider a renewable power generator selling renewable electricity to $N$ load serving entities (LSEs), or directly to consumers, in a two-stage electricity market. In the day ahead market, LSEs report their inverse demand function and the cost of buying electricity at the spot price should there be a shortfall in supply. Accordingly, the generator contracts to sell a specific amount of electricity to each LSE. In the real time market, the renewable generation is realized, and some LSEs might observe a shortage in supply. As an incentive to motivate LSEs to buy random energy, the generator offers to compensate LSEs that experience a shortfall in the amount of electricity contracted for in the day ahead market. We denote consumers who accept this deal to be flexible. With this compensation in mind, the generator will incur some random cost and has to decide on how much electricity to assign to each LSE and, also, to specify how much each LSE has to pay. In this decision process, the goal is to maximize the social value generated while accounting for the expected cost of compensation resulting from the possibility of not generating the contracted power due to the natural uncertainty in generation. This need for a reliable decision under the inherent uncertainty calls for the the use of stochastic programming with recourse. The employment of this powerful mathematical tool to arrive at an efficient allocation, which maximizes the welfare of all the parties involved in this transaction is one major contribution of this paper. The other important component of this work is the payment scheme and the associated bound on the generator's profit. Before delving in the details of our work, we review some of the relevant literature.

\subsection{Prior Work}
Current power systems were designed to generate, transmit and distribute a fixed supply of electricity to consumers whose demand exhibits some degree of uncertainty. Integrating renewable energy into this system creates new challenges as it introduces uncertainty to the supply end of the system, a feature the current power systems are not equipped to handle \cite{bird2013integrating}. References \cite{decesaro2009wind} and \cite{wind2010solar}, and the references therein, provide an overview of proposed modifications to the operation and control of power systems that are needed to accommodate renewables and fully harness its benefits.
One important aspect of integrating renewables is how to trade renewable energy in the existing electricity market and how this trade impacts its operation. During the last decade, the problem of designing market mechanisms for renewable energy received a lot of attention resulting in large literature. The three main strands of related work study the important problems of minimizing the imbalance costs, distributing the imbalance costs among market participants, and devising novel market mechanisms to manage the imbalance. 

{\it Managing the imbalance costs:} Uncertainty in generation, associated with renewable energy, adds to the challenge of maintaining the reliability of the power grid. In the presence of renewable generation, it is more likely for generators to deviate from generation schedules that are designed to meet the demand in real time, resulting in imbalance costs that reduce the welfare of players in the system. In this setting, papers \cite{morales2010short,bitar2012,skajaa2015}  and \cite{martin2015} studied techniques that minimizes the cost of imbalance, that the generator faces, by computing optimal renewable energy offering that accounts for uncertainty in generation and market prices or engaging in intra-day markets that allow the generator to modify its day ahead offering as the delivery time approaches and better forecasts become available. Another important technique to manage grid stability is harnessing the flexibility of demand to better maintain the balance of the grid. To this end, \cite{albadi2008} surveys flexible resources in electricity market. It did not take long to recognize the importance of demand response in the process of renewable energy integration \cite{DoE}. Many pricing schemes and incentive mechanisms were developed to exploit demand side flexibility, \cite{bitar2013}, \cite{liu2014}, \cite{jia2016} and \cite{gupta2015} sample this important line of research.

{\it Distributing the imbalance costs:}  Meyn et al. \cite{meyn2010} studied the competitive equilibrium of the stochastic two-stage electricity market in the presence of wind  generators and computed the surplus of both suppliers and consumers. They showed that, with deeper levels of penetration, variability in generation, which requires more reserves to meet reliability constraints, only reduces the surplus of consumers but not renewable generators. Based on this observation, they proposed that generators carry the burden of obtaining reserves in the day ahead market by making arrangements with conventional generators or investing in storage. In the mean time, whenever imbalance costs can not be completely avoided and to reduce the financial burden currently placed on consumers, \cite{lin2014} proposes a cost-sharing mechanism which fairly divides the cost of imbalance among the renewable generators that fall short of their contracted amounts.
  
{\it Energy offering strategies:} Trading renewable generation requires new market mechanisms that cater for the stochastic nature of the traded resource. Bitar et al. \cite{bitar2012} explicitly find a formula for the optimal offering and quantitatively relates profitability of one renewable generator to the uncertainty in generation. In \cite{tang2011}, Tang and Jain take one step farther by considering an auction where multiple renewable generators can sell their production to an aggregator, who acts as an agent between the generators and electricity market. 

So far, the market is assumed to be competitive  so that generators are price takers. Recently, the research community became interested in understanding opportunities to exercise market power by paying attention to the strategic behavior of the parties involved in the market and device methods to minimize or even eliminate such instances. Tang and Jain show in \cite{tang2015market} that the Locational Marginal Pricing mechanism opens room for manipulation and propose a second-price-auction-like mechanism for renewable generators to bid their probability distributions in the electricity market. \cite{xu2017efficient} addresses this concern in a setting where conventional generators are anticipated to game the system by modifying their bids to increase their surplus, a situation that reduces the social efficiency of the market. 

In contrast to the existing body of research, this paper promotes load flexibility to consumers who are open to the possibility of not consuming electricity at all if they are compensated accordingly. Consider, for example, an electric vehicle (EV) that is 75\% charged and is plugged in for 2 hours at a commercial charging station (say at a mall or workplace garage). If the EV owner does not need the vehicle fully charged, then it has a completely different kind of flexibility. The owner can pick an option in which, if the vehicle gets charged, the owner pays a small price, and if the vehicle does not charge or is charged partially, then the owner gets some compensation for unmet demand. Thus, the generator offers random generation to customers who can tolerate the lack of electricity at the time of delivery, wherein the generator guarantees no monetary losses in this transaction. In this paper, we study the associated optimization and pricing problem in such a market. We believe that utility scale storage, cloud servers, and certain non-critical load serving enetities can participate in this market. This paper is an expanded version of our previous work in \cite{dakhil2018cdc}; we refer to related results in \cite{dakhil2018cdc} whenever needed. Further, due to space constraints, we present the proofs of our results in archived version of this paper in \cite{dakhil2018arxiv}.

\subsection{Organization of the Paper}
We first formulate the seller's stochastic program in Section \ref{sec:pf}. In Section \ref{sec:alloc}, we analyze and solve the optimization problem for the efficient allocation and show that it satisfies a monotonically increasing property with respect to the strategic parameter. Section \ref{sec:price} reviews our pricing scheme, which exploits monotonicity in allocation, and studies consumers' surplus and seller's profit. Simulation results are presented in Section \ref{sec:simulation}.  Finally, we provide concluding remarks in section \ref{sec:con}. All proofs are available in the appendices.
%All proofs are available in the appendices.

\section{Problem Formulation}\label{sec:pf}
We have one generator who wants to sell its random generation to $N$ LSEs. Let $(\Omega,\ALP F,\mathbb{P})$ be a standard probability space. Let $w(\omega)$ be the electricity generated by the generator in the real time market. We use $f:(-\infty,\infty) \rightarrow[0,\infty)$ and $F:(-\infty,\infty) \rightarrow[0,1]$ to denote the density function and cumulative distribution function of the random variable $w(\omega)$, respectively. They are given by
\beqq{F(z) = \pr{w(\omega)\leq z} = \int_{-\infty}^z f(w) dw.}
We restrict $f(z) = 0$ for $z\leq 0$ and $f(z)>0$ for $z> 0$. Note that $F$ is a monotonically increasing differentiable function because it admits a density function with respect to a Lebesgue measure on the real line.

The generator asks LSEs to report their bids in the day ahead market. The $i$-th LSE responds by reporting a bid that consists of two parameters, $(c_i,\pi_i)$: $c_i$ represents the price per unit electricity LSE $i$ is willing to pay. This corresponds to a perfectly elastic demand represented by a constant inverse demand function. We make this assumption to simplify the analysis of allocating an infinitely divisible good, i.e. electricity, that has infinite dimensional bids. The parameter $\pi_i$ is the penalty per unit of electricity paid by the generator to LSE $i$ who accepts to receive this amount as a form of compensation in case of shortfall in the real time market. Let $x_i$ denote the amount of power contracted with LSE $i$ in the day ahead market, where $1\leq i \leq N$. We use $y_i(x;\omega)$ to denote the shortfall of LSE $i$ in the real time market, which would be a function of the allocation $x:=(x_1,\ldots,x_N)$ and randomness $\omega$. Define $y:=(y_1,...,y_N)$.

To simplify the analysis, we assume that there is no cost to dump any excess generation in real time. Therefore, the problem is only concerned with the case the realized generation is less than or equal to the sum of allocated electricity. Also, note that there are no network constraints in this formulation, where we assume that the transmission capacity is large enough to support the committed allocation.
 
% We assume $\pi_1<\pi_2<...<\pi_N$, which facilitates the analysis of the stochastic program. For technical reason, we make the following assumptions on the bid vectors. The inequalities in the  aforementioned order are strict. Also, we assume that $c_i-c_{i-1}\leq\pi_i-\pi_{i-1}$ for all $1\leq i\leq N$, where $c_0 = \pi_0 = 0$. The violation of the last assumption implies that the generator must allocate infinite amount of electricity to LSE $i.$ 

\subsection{Key Questions for the Auction Design}

The generator designs a mechanism that determines: (i) the allocation $x^*$ of electricity in the day ahead market and the shortfall $y^*$ in the real time market, and (ii) the prices $p_i^*$ that each LSE $i$ needs to pay in exchange of $x_i^*$ units of electricity. Define $p^* = (p_1^*,\ldots,p_N^*)$. The tuple $(x^*,y^*,p^*)$ denotes the mechanism designed by the generator. Since the allocation and the prices are based on the bids, buyers have motive to lie about their bids $\{(c_i,\pi_i)\}_{i=1}^N$, if that improves their (expected) payoffs. In this case, it is desirable to devise the mechanism $(x^*,y^*,p^*)$ which satisfies some desirable properties stated below.

% When $(x^*,p^*)$ possesses this property, it is said to be a dominant strategy incentive compatible (DSIC), or strategy proof, mechanism. 

We study here the case in which LSEs strategically choose $c=(c_1,\ldots,c_N)$, whereas $\pi=(\pi_1,\ldots,\pi_N)$ is held constant and is known in advance. A typical example of this scenario is where electricity is procured from traditional generators in real time market to make up for the shortfall at known prices. The utility $u_i$ of LSE $i$ is a function of the allocation $x_i$ and the price $p_i$, and takes this quasilinear form
\beqq{u_i(x_i,p_i) = c_ix_i - p_i.}
The LSEs, as rational agents, act to maximize their individual utility functions, which, as the last equation shows, depend on the bids of all LSEs through the allocation and price. Let $s$ be the true value of all LSEs and denote the utility of LSE $i$ at $x_i^*$ and $p_i^*$ when it bids $\hat s_i$ and the others bid $\hat s_{-i}$ as
\beqq{U_i(\hat s_i,\hat s_{-i}) = c_ix_i^*(\hat s_i,\hat s_{-i}) - p_i^*(\hat s_i,\hat s_{-i}).}

On the other hand, the seller's expected utility at $(x^*,y^*,p^*)$ is given by
\beqq{U_0(\hat s)=\sum_{i=1}^Np_i^*(\hat s_i,\hat s_{-i})-\sum_{i=1}^N\ex{\pi_iy_i^*(x^*(\hat s_i,\hat s_{-i}),\omega)}.}
The goal of the generator is to devise the mechanism $(x^*,y^*,p^*)$ that is {\it efficient, incentive compatible in dominant strategy, individually rational, and budget balanced}. We formally define these properties below:

% Now we formally define the three mostly desirable characteristics of a mechanism, incentive compatibility in dominant strategy, individual rationality and budget balancedness \cite{borgers2015}, \cite{parkes2001}.

\begin{definition}\label{def:efficient}
A mechanism $(x^*,y^*,p^*)$ is efficient if the allocation $(x^*,y^*)$ maximizes social welfare $W(x,y) =  \sum_{i=1}^{N} \Big(c_ix_i(s) - \mathbb{E}\big[\pi_iy_i(x(s);\omega) \; \big]\Big)$.
\end{definition}

\begin{definition}\label{def:IC}
A mechanism $(x^*,y^*,p^*)$ is incentive compatible in dominant strategy if for every LSE $i\in\{1,\ldots,N\}$, if it reports $\hat s_i$ while its true bid is $s_i$, then
\beqq{U_i({s_i},\hat s_{-i})\geq U_i(\hat s_i,\hat s_{-i}),}
for all $s_i$, $\hat s_i$, and $\hat s_{-i}$.
\end{definition}
\begin{definition}\label{def:IR}
A mechanism $(x^*,y^*,p^*)$  is individually rational if for every LSE $i\in\{1,\ldots,N\}$,
\beqq{U_i({ s_i}, s_{-i})\geq 0\quad \text{ for all } s_i \text{ and } s_{-i}.}
\end{definition}
\begin{definition}\label{def:BB}
A mechanism $(x^*,y^*,p^*)$ is budget balanced if the generator's payoff satisfies $U_0(s)\geq 0$ for any bid vector $s$.
\end{definition}

% So the goal in this work is to engineer $x^*$ and $p^*$ in such away so that for all agents\footnote{} and all allowed values of $c_i$, LSE $i$ will always participate in the market and his utility is maximized at his true $c_i$. Next we devise the rule $x^*(c)$.
% The terms LSE, agent, player and participant are interchangable in this paper.
\subsection{Solution Approach}
To devise an efficient mechanism, the generator formulates a stochastic program to compute an allocation that maximizes social welfare (see Definition \ref{def:efficient}). This translates to the following two-stage optimization problem with recourse:

\begin{equation}
\label{eqn:stp}
\begin{aligned}
 \underset{x\in\Re^N, y:\Re^N\times\Omega \rightarrow\Re^N}{\text{max}} & \; \Bigg\{ \sum_{i=1}^{N} c_ix_i - \mathbb{E}\Bigg[\sum_{i=1}^{N} \pi_iy_i(x;\omega) \; \Bigg]\Bigg\}\\
\text{subject to } & x_i \geq 0,  \;  \forall i \\
& 0\leq y_i(x;\omega) \leq x_i,  \;  \forall i, \\
&\sum_{i=1}^{N} y_i(x;\omega) =\Big(\sum_{i=1}^{N} x_i -w(\omega)\Big)^+.
\end{aligned}
\end{equation}
where the expectation is with respect to $w$. Since the utility of each LSE is affine in its bid parameter $c_i$, we use Myerson's payment scheme \cite{myerson1981} to determine the payment structure $p^*$. We show that for this problem, this payment scheme, coupled with the allocation computed by solving \eqref{eqn:stp}, induces a mechanism $(x^*,y^*,p^*)$ that is incentive compatibile in dominant strategies, individually rational, and budget balanced.

\section{Optimal Allocation and Shortfall Scheme}\label{sec:alloc}
The two-stage stochastic program formulated in \eqref{eqn:stp} is solved in our previous work \cite{dakhil2018cdc}. We review the main steps in deriving the allocation and shortfall scheme for completeness here. We solve (\ref{eqn:stp}) backwards in two steps. First we minimize the cost function of the second stage with respect to the decision variable $y$ subject to relevant constraints. Then, we use the optimal solution of the second stage to compute the optimal allocation among the competing LSEs. We first define some notation and make the following assumption on the bids of the LSEs.

Define $c_0 = \pi_0 = 0$. For $i\neq N$, let $\alpha_i$ and $\beta_i$ be defined as 
\beq{ \alpha_i &:= \frac{c_{i+1}(\pi_i - \pi_{i-1}) +c_{i-1}(\pi_{i+1} - \pi_i)}{\pi_{i+1} - \pi_{i-1}},\label{eqn:alpha}\\
\beta_i &:= c_{i+1}-(\pi_{i+1}-\pi_i).\nonumber\label{eqn:beta}}
Let $\alpha_N = c_{N-1}$ and $\beta_N = c_{N-1}+(\pi_N - \pi_{N-1})$.
\begin{assumption}\label{assm:bids}
\begin{enumerate}
    \item The penalty is ordered such that $\pi_1<\pi_2<...<\pi_N$.
    \item The value of LSE $i$ satisfies 
\beqq{\alpha_i <c_i <\beta_i.}
\end{enumerate}
\end{assumption}
% 
% \begin{lemma}
% If \beqq{c_i < \frac{c_{i-1}(\pi_i-\pi_{i-2}) -c_{i-2}(\pi_i - \pi_{i-1})}{\pi_{i-1} - \pi_{i-2}}}
% \end{lemma}

If all agents have distinct penalty parameter, then Assumption \ref{assm:bids} (1) can be made without loss of generality. If Assumption \ref{assm:bids} (2) is violated by LSE $i$, then the generator will allocate either zero or infinite units of electricity to LSE $i$. We place this assumption on all the statements proved in the paper. The following theorem presents the solution to this stochastic program, the proof of which is outlined in \ref{sec:S2} and \ref{sec:S1}.

\begin{theorem}\label{thm:stp}
Consider the two-stage stochastic program in \eqref{eqn:stp} in which the bids satisfy Assumption \ref{assm:bids}. The optimal shortfall $y^*$, as a function of allocation $x$ and randomness $\omega$, is given by

\begin{equation}\label{eq:yistar}
 y^*(x;\omega)= \begin{cases} 
 \big(x_1,x_2,...,x_{k_x(\omega)-1},\varphi_{k_x(\omega)}(x) - w(\omega)\\
 0,0,...,0\big)  \qquad \text{if}~\textbf{1}_N \cdot x > w(\omega)\\
 \textbf{0} \qquad \qquad \qquad \text{else}
 \end{cases},
 \end{equation}
 where $k_x(\omega)$ is given by
\begin{equation} 
\label{eqn:kxomega}
k_x(\omega)=  \begin{cases} 
i & \varphi_{i+1}(x) \leq  w(\omega)< \varphi_{i}(x) \\
N & w(\omega)\leq x_N
\end{cases},
\end{equation}
and $\varphi_j(x)=\sum_{k=j}^Nx_k = x_j+\ldots+x_N$. The optimal allocation $x^*$ is given by
\begin{align}
x_i^* &= F^{-1}\left( \frac{c_i-c_{i-1}}{\pi_i - \pi_{i-1}} \right) - F^{-1}\left( \frac{c_{i+1}-c_i}{\pi_{i+1} - \pi_i} \right) \label{eqn:xistar}\\
x_N^* &= F^{-1}\left( \frac{c_N-c_{N-1}}{\pi_N-\pi_{N-1}} \right).\label{eqn:xNstar}\nonumber\\
\end{align}
\end{theorem}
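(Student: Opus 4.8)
\emph{The plan is to solve the two-stage program \eqref{eqn:stp} backwards:} first optimize the recourse $y$ for a frozen first-stage allocation $x\ge 0$ and realization $w=w(\omega)$, then optimize the resulting value function over $x$. For the second stage, if $\mathbf{1}_N\!\cdot x\le w$ the equality constraint reads $\sum_i y_i=0$, which with $y_i\ge 0$ forces $y=\mathbf{0}$, matching \eqref{eq:yistar}. Otherwise the required total shortfall $S:=\mathbf{1}_N\!\cdot x-w$ satisfies $0<S<\mathbf{1}_N\!\cdot x$, the feasible polytope $\{y:0\le y_i\le x_i,\ \sum_i y_i=S\}$ is nonempty and compact, and the inner problem $\min_y\sum_i\pi_i y_i$ has a solution. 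I would prove optimality of the ``fill coordinates in increasing index order'' policy by an exchange inequality: if a feasible $y$ had indices $i<j$ with $y_i<x_i$ and $y_j>0$, shifting a small mass from $y_j$ to $y_i$ stays feasible and strictly lowers the cost by a positive multiple of $\pi_j-\pi_i$ (here Assumption~\ref{assm:bids}(1) is used). Hence an optimal $y^*$ saturates coordinates $y^*_i=x_i$ up to a threshold index, which must be the $k_x(\omega)$ of \eqref{eqn:kxomega}, takes the value $\varphi_{k_x(\omega)}(x)-w$ there, and is $0$ beyond; the two inequalities defining $k_x(\omega)$ are precisely what place the fractional coordinate in $[0,x_{k_x(\omega)}]$. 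Since $w\mapsto k_x(w)$ is piecewise constant, $y^*(x;\cdot)$ is measurable, hence an admissible recourse.

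\emph{For the first stage}, substituting $y^*$ reduces the problem to maximizing over $x\ge 0$ the function $\sum_i c_ix_i-\mathbb{E}\big[\sum_i\pi_i y_i^*(x;\omega)\big]$, and the one nontrivial computation is $\mathbb{E}[y_i^*(x;\omega)]$. Partitioning the sample space according to $\{w<\varphi_{i+1}(x)\}$, $\{\varphi_{i+1}(x)\le w<\varphi_i(x)\}$, $\{w\ge\varphi_i(x)\}$ (with the convention $\varphi_{N+1}\equiv 0$) and integrating the linear-in-$w$ middle piece by parts collapses both contributions into $\mathbb{E}[y_i^*(x;\omega)]=\int_{\varphi_{i+1}(x)}^{\varphi_i(x)}F(w)\,dw=H(\varphi_i(x))-H(\varphi_{i+1}(x))$, where $H(a):=\int_0^a F(w)\,dw$. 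A summation by parts in $i$ (using $c_0=\pi_0=0$) then rewrites the objective as $V(x):=\sum_i c_ix_i-\sum_i(\pi_i-\pi_{i-1})H(\varphi_i(x))$. Since $H'=F$ is nondecreasing $H$ is convex, each $\varphi_i$ is affine in $x$, and the weights $\pi_i-\pi_{i-1}$ are positive by Assumption~\ref{assm:bids}(1); therefore $V$ is concave and any interior stationary point is a global maximizer over $\{x\ge 0\}$. Computing $\partial V/\partial x_j=c_j-\sum_{i\le j}(\pi_i-\pi_{i-1})F(\varphi_i(x))$ and differencing the stationarity equations for $j$ and $j-1$ yields $(\pi_j-\pi_{j-1})F(\varphi_j(x^*))=c_j-c_{j-1}$, so $\varphi_j(x^*)=F^{-1}\!\big(\frac{c_j-c_{j-1}}{\pi_j-\pi_{j-1}}\big)$; subtracting consecutive values via $x_j^*=\varphi_j(x^*)-\varphi_{j+1}(x^*)$ gives exactly \eqref{eqn:xistar} together with the stated $x_N^*=\varphi_N(x^*)$. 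Finally, Assumption~\ref{assm:bids}(2) is precisely the condition under which each argument of $F^{-1}$ lies in the range on which $F^{-1}$ is finite and positive and each $x_j^*$ is nonnegative, so that the interior critical point is feasible; if it fails at some LSE $i$, that coordinate is driven to a boundary ($0$ or $+\infty$), the degenerate case the assumption excludes.

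\emph{The main obstacle} I anticipate is the second stage: turning the informal exchange argument into a rigorous optimality proof for the $\omega$-parametrized linear program together with a measurable optimal selection, and then executing the integration-by-parts and summation-by-parts reduction that compresses $\mathbb{E}\big[\sum_i\pi_i y_i^*(x;\omega)\big]$ into the single convex penalty $\sum_i(\pi_i-\pi_{i-1})H(\varphi_i(x))$. Once that identity is in hand, the concavity of $V$ and the feasibility check against Assumption~\ref{assm:bids}(2) make the first-stage optimization essentially mechanical.
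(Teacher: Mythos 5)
Your proposal is correct and follows the paper's overall route --- solve the stage-2 linear program for fixed $x$ and $\omega$, substitute the optimal recourse, and then apply first-order conditions for a convex problem over the positive orthant, differencing consecutive stationarity equations to get $\varphi_j(x^*)=F^{-1}\big(\tfrac{c_j-c_{j-1}}{\pi_j-\pi_{j-1}}\big)$ --- but two of your intermediate steps are genuinely different from the paper's, and both are arguably cleaner. For the second stage, the paper merely states the saturating solution $y^*(x;\omega)$ with a verbal rationale (compensate the low-penalty LSEs first), whereas you supply an explicit exchange argument and note measurability of the selection; this fills in a step the paper leaves implicit. For the first stage, the paper computes $V(x)=\mathbb{E}[Q(x;\omega)]$ as a five-term expression in $F$ and $G$ (its Lemma \ref{lem:value}) and proves convexity abstractly, by showing $Q(\cdot;\omega)$ is convex because convex combinations of optimal recourse vectors remain feasible, then invoking linearity of expectation; you instead integrate by parts to collapse $\mathbb{E}[y_i^*(x;\omega)]$ into $H(\varphi_i(x))-H(\varphi_{i+1}(x))$ with $H(a)=\int_0^a F(w)\,dw$ (which one can check agrees exactly with the paper's formula \eqref{eq:Ey}), and an Abel summation then writes the expected penalty as $\sum_i(\pi_i-\pi_{i-1})H(\varphi_i(x))$, making concavity of the objective immediate from convexity of $H$ and positivity of the penalty increments. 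Your closed form buys a more transparent derivation of the derivative \eqref{eqn:hxi} and of the stationarity system; the paper's abstract convexity argument buys generality, since it does not require evaluating the expectation in closed form. The only place you are thinner than the paper is the verification that the interior candidate is in fact strictly positive (the paper's Lemma \ref{lem:ci}, proved by backward induction from $c_i>\alpha_i$ and monotonicity of $F^{-1}$); you assert that Assumption \ref{assm:bids}(2) guarantees this, which is the right claim and an easy consequence of \eqref{eqn:xistar}, but in a full write-up you should state the one-line argument rather than leave it as a remark about boundary cases.
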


An outline of the proof of the theorem is provided below. 
% Furthermore, in section \ref{sec:price}, Theorem \ref{thm:DSIC} and Theorem \ref{thm:VCG} will show that there is a pricing policy for this allocation scheme that leads to truthful bidding being a dominant strategy for all participants. 

\subsection{Stage 2 Optimization Problem}\label{sec:S2}
Given a realization of wind generation,  $w(\omega)$, we first minimize the linear cost objective inside the expectation by solving the following linear program: 
\begin{equation} 
\label{eqn:Qomegax}
\begin{aligned}
Q(x;\omega) = & \;\underset{y}{\text{min}}
& &\pi \cdot y(x;\omega)\\
& \; \text{s.t.} & & 0 \leq y(x;\omega) \leq x, \\
& & & \textbf{1}_N \cdot y (x;\omega)= \Big( \textbf{1}_N \cdot x -w(\omega)\Big)^+,\\
\end{aligned}
\end{equation}
where $\textbf{1}_N$ is a column vector of $N$ ones. Recall that $\pi_1<\pi_2<...<\pi_N$. The optimal solution to this linear program, $y^*(x;\omega)$ is a random vector given by
\begin{equation*}
 y^*(x;\omega)= \begin{cases} 
 \big(x_1,x_2,...,x_{k_x(\omega)-1},\varphi_{k_x(\omega)}(x) - w(\omega),\\
 0,0,...,0\big)  \qquad \text{if}~\textbf{1}_N \cdot x > w(\omega)\\
 \textbf{0} \qquad \qquad \qquad \text{else}
 \end{cases}
 \end{equation*}
where $k_x(\omega)$ is given in \eqref{eqn:kxomega}. Note that the components of $y$ are zeros for all indices that are greater than $k_x(\omega)$. The rationale behind this solution is when wind is generated, it is more cost effective to allocate the available wind to the more expensive buyers and compensate the less expensive LSEs.

We use $y^*(x;\omega)$ to compute the optimal value of the objective function of the linear program in (\ref{eqn:Qomegax}). To simplify the computation we use the indicator function, $\mathds{1}_{\{\cdot\}}$,   which takes the value 1 if $\{\cdot\}$ is satisfied and 0 otherwise. Then, the optimal value can be written as
\begin{align}
\label{eqn:Qxw}
Q(x;\omega) & =\sum_{i=1}^{N}\pi_i y_i^*(x;\omega)\nonumber \\
% &=\sum_{i=1}^{k_x(\omega)-1}(\pi_ix_i)+\pi_{k_x(\omega)}\big(\varphi_{k_x(\omega)}(x)-w(\omega)\big)\nonumber \\
& = \sum_{i=1}^N \pi_i \bigg[\mathds{1}_{\lbrace k_x(\omega)>i\rbrace} x_i+\mathds{1}_{\lbrace k_x(\omega)=i\rbrace} \varphi_i(x)\nonumber \\
&\qquad \qquad \qquad -\mathds{1}_{\lbrace k_x(\omega)=i\rbrace} w(\omega)\bigg],
\end{align}
where $\varphi_i$ is as defined in \eqref{eqn:kxomega}. We proceed to the next task, computing the expected value of the second stage, 
\beq{V(x):=\ex{Q(x;\omega)}.\nonumber}
We define the function $G:[0,\infty) \rightarrow \mathbb{R}$ to be  
\begin{equation}
G(z)=\int_0^z wf(w)dw. \label{eqn:Gz}
\end{equation}
and use it to evaluate $V(x)$.

\begin{lemma}
\label{lem:value}
Recall that $\varphi_j(x) = \sum_{k=j}^N x_k$. The second stage expected value function, $V(x) = \ex{Q(x;\omega)}$, is 
\begin{align}\label{eqn:value}
V(x) &=\sum_{i=1}^{N}\pi_i x_iF\Big(\varphi_{i+1}(x)\Big)+\sum_{i=1}^{N}\pi_i \varphi_{i}(x)F\Big(\varphi_{i}(x)\Big)\nonumber\\
&-\sum_{i=1}^{N}\pi_i \varphi_{i}(x)F\Big(\varphi_{i+1}(x)\Big)-\sum_{i=1}^{N}\pi_i G\Big(\varphi_{i}(x)\Big)\nonumber\\
&+\sum_{i=1}^{N}\pi_i G\Big(\varphi_{i+1}(x)\Big).
\end{align}
Further, $V$ is a deterministic and convex function of $x$.
\end{lemma}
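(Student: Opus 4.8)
The plan is to obtain the closed form \eqref{eqn:value} by integrating the per-scenario expression \eqref{eqn:Qxw} against the law of $w(\omega)$, and to establish convexity of $V$ by arguing from the parametric linear program \eqref{eqn:Qomegax} that defines $Q(x;\omega)$ rather than from \eqref{eqn:value} itself.

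For the formula, the point is that the distribution of the random index $k_x(\omega)$ in \eqref{eqn:kxomega} follows at once from the (continuous) law of $w(\omega)$ by telescoping the intervals appearing there. With the conventions $\varphi_{N+1}(x)=0$ (empty sum) and $F(0)=0$ (valid since $f$ vanishes on $(-\infty,0]$), this gives, for each $i\in\{1,\dots,N\}$,
\beqq{\pr{k_x(\omega)>i}=\pr{w(\omega)<\varphi_{i+1}(x)}=F\big(\varphi_{i+1}(x)\big),\qquad \pr{k_x(\omega)=i}=F\big(\varphi_{i}(x)\big)-F\big(\varphi_{i+1}(x)\big),}
and, by the definition \eqref{eqn:Gz} of $G$,
\beqq{\ex{w(\omega)\,\ind{k_x(\omega)=i}}=G\big(\varphi_{i}(x)\big)-G\big(\varphi_{i+1}(x)\big).}
On the event $\{w(\omega)\ge\varphi_1(x)\}$, on which \eqref{eqn:kxomega} assigns no value to $k_x(\omega)$, every indicator in \eqref{eqn:Qxw} vanishes and $Q(x;\omega)=0$, consistently with the formulas above. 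Taking the expectation of \eqref{eqn:Qxw} by linearity, substituting these three identities, and grouping the resulting five sums yields \eqref{eqn:value}. Since its right-hand side depends on $x$ only through the fixed maps $F$ and $G$, $V$ is deterministic.

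For convexity I would avoid differentiating \eqref{eqn:value} and instead use that $Q(x;\omega)$ is the optimal value of the linear program \eqref{eqn:Qomegax}, in which $x$ enters both the upper bound $y\le x$ and the equality right-hand side $(\textbf{1}_N\cdot x-w(\omega))^+$. The one obstruction to invoking the standard fact that an LP value is convex in its right-hand side is that $(\textbf{1}_N\cdot x-w(\omega))^+$ is convex but not affine in $x$; this is removed by noting that, since $\pi\ge 0$, replacing the equality by the inequality $\textbf{1}_N\cdot(x-y)\le w(\omega)$ leaves the optimal value unchanged: if $\textbf{1}_N\cdot x\le w(\omega)$ both problems have value $0$ at $y=0$, and if $\textbf{1}_N\cdot x> w(\omega)$ any feasible $y$ of the relaxation with $\textbf{1}_N\cdot y>\textbf{1}_N\cdot x-w(\omega)$ can be decreased componentwise to meet the original equality without increasing $\pi\cdot y$. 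The relaxed feasible set $\{(x,y):y\ge 0,\ x-y\ge 0,\ \textbf{1}_N\cdot(x-y)\le w(\omega)\}$ is a polyhedron jointly in $(x,y)$, so $Q(\cdot;\omega)$ equals the partial minimization over $y$ of a jointly convex function of $(x,y)$ and is therefore convex; equivalently, LP duality gives $Q(x;\omega)=\max_{\lambda\ge 0}\big(\sum_{i=1}^{N}\min\{\pi_i,\lambda\}\,x_i-\lambda\,w(\omega)\big)$, a pointwise supremum of affine functions of $x$. Then $V(x)=\ex{Q(x;\omega)}$ is convex because expectation preserves convexity, and it is finite-valued by the formula just derived. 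The only non-routine step is this relaxation of the equality constraint; the term-by-term integration and the appeals to convexity of LP value functions and to its preservation under expectation are standard.
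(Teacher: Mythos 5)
Your proposal is correct, and it splits into two parts of different character. For the closed-form expression you follow essentially the paper's own route: integrate \eqref{eqn:Qxw} term by term using $\pr{k_x(\omega)=i}=F\big(\varphi_i(x)\big)-F\big(\varphi_{i+1}(x)\big)$, $\pr{k_x(\omega)>i}=F\big(\varphi_{i+1}(x)\big)$ (by telescoping), and $\ex{w(\omega)\ind{k_x(\omega)=i}}=G\big(\varphi_i(x)\big)-G\big(\varphi_{i+1}(x)\big)$, with the same conventions $\varphi_{N+1}(x)=0$ and $F(0)=0$; this matches Appendix \ref{app:value} step for step. Where you genuinely diverge is the convexity claim. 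The paper proves convexity of $Q(\cdot;\omega)$ pathwise by taking optimal recourse vectors $y_1^*,y_2^*$ for $x_1,x_2$ and asserting that $\alpha y_1^*+(1-\alpha)y_2^*$ is feasible for $\alpha x_1+(1-\alpha)x_2$; that step is delicate because the coupling constraint is an equality with the non-affine right-hand side $(\textbf{1}_N\cdot x-w(\omega))^+$, and the convex combination satisfies it only with ``$\geq$'' in general (for instance when $\textbf{1}_N\cdot x_1<w(\omega)<\textbf{1}_N\cdot x_2$), so the paper implicitly relies on being able to dispose of the excess shortfall. Your argument makes this disposal explicit: relaxing the equality to $\textbf{1}_N\cdot(x-y)\leq w(\omega)$ leaves the value unchanged because $\pi\geq 0$, and the relaxed problem is a partial minimization of a linear objective over a set that is polyhedral jointly in $(x,y)$, or equivalently, via the dual representation $Q(x;\omega)=\max_{\lambda\geq 0}\big(\sum_{i=1}^N\min\{\pi_i,\lambda\}x_i-\lambda w(\omega)\big)$, a pointwise supremum of affine functions of $x$. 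Both arguments then finish identically by noting that expectation preserves convexity. What your route buys is rigor at the one delicate point (and it works for any nonnegative penalty vector without inspecting the structure of $y^*$); what the paper's route buys is brevity, since it never leaves the original formulation. The conclusion of the lemma, including that $V$ is deterministic, is fully established either way.
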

\begin{proof}
See  Appendix \ref{app:value}.
\end{proof}

Having all the terms of the second stage expected value as functions of $x$, we finished the formulation of the cost function of the first stage and we can proceed to find the optimal, welfare maximizing, allocation $x^*$.
\subsection{Stage 1 Optimization Problem}\label{sec:S1}
To solve the first stage optimization problem, we first rewrite it as a minimization of a cost function. This cost function is composed of the expected cost of serving the shortfall in generation at the second stage minus the total value resulting from the allocation. This minimization problem is written as:
\begin{equation*}
\begin{aligned}
& \underset{x}{\text{min}} \; h(x) := -c\cdot x+V(x) \\
& \text{subject to} \; x\geq 0, x\in\Re^N.
\end{aligned}
\end{equation*}
We note that this is just a problem of minimizing a convex objective over a positive orthant. The derivative $\frac{\partial h}{\partial x_i}$ is computed as
\begin{equation}
\label{eqn:hxi}
\frac{\partial h}{\partial x_i}=-c_i+\sum_{j=1}^{i}(\pi_j-\pi_{j-1})F\Big(\varphi_{j}(x)\Big)\quad 1\leq i \leq N.
\end{equation}
The necessary and sufficient conditions for optimality given in Example 2.1.1 from \cite{bertsekas1999} state that at the optimal allocation, $x^*$, the derivative of $h(x)$ is nonnegative, that is $\frac{\partial h}{\partial x_i}|_{x^*}\geq 0$. This implies that
\beq{\label{eqn:iff}x^* \text{ is optimal} \Longleftrightarrow c_i\leq \sum_{j=1}^{i}(\pi_j-\pi_{j-1})F\Big(\varphi_{j}(x^*)\Big).}
Furthermore, 
\begin{enumerate}
\item if $x_i^*>0$, then  $c_i = \sum_{j=1}^{i}(\pi_j-\pi_{j-1})F\Big(\varphi_{j}(x^*)\Big)$;
\item if $c_i < \sum_{j=1}^{i}(\pi_j-\pi_{j-1})F\Big(\varphi_{j}(x^*)\Big)$, then $x_i^*=0$. 
\end{enumerate}

If $x^*$ is strictly positive, the following equations are satisfied
\beq{x_i^*+\ldots+x_N^* = F^{-1}\left( \frac{c_i-c_{i-1}}{\pi_i-\pi_{i-1}} \right),\quad 1\leq i\leq N.\label{eqn:optcond}}
By solving the equations above, we arrive at the formula of $x^*$ given in (\ref{eqn:xistar})-(\ref{eqn:xNstar}). The next Lemma provides conditions for LSE to be allocated zero or positive amount of electricity.

\begin{lemma}\label{lem:ci}
Suppose that Assumption \ref{assm:bids} holds. Then, the allocation $x_i^*$ is strictly positive for all $i$.
\end{lemma}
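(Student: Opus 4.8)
The plan is to exhibit the candidate allocation $\hat{x}$ prescribed in the statement of Theorem \ref{thm:stp}, show that it is well defined and has strictly positive entries, and then verify directly from the first-order conditions \eqref{eqn:hxi} that it is the minimizer $x^*$; positivity of $x^*$ is then immediate. Throughout, write $r_i := (c_i - c_{i-1})/(\pi_i - \pi_{i-1})$ for $1 \le i \le N$, and adopt the convention $F^{-1}(r_{N+1}) := 0$ (consistent with $\varphi_{N+1} \equiv 0$), so that $\hat{x}_i = F^{-1}(r_i) - F^{-1}(r_{i+1})$ for every $i$. Because $f$ is strictly positive on $(0,\infty)$ and vanishes on $(-\infty,0]$, we have $F(0)=0$ and $F$ is continuous and strictly increasing on $[0,\infty)$; hence $F^{-1}$ is continuous and strictly increasing on $(0,1)$, with $F^{-1}(u) > 0$ for $u \in (0,1)$.

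The heart of the argument is the chain $0 < r_N < r_{N-1} < \cdots < r_1 < 1$. For the strict ordering, fix $i < N$: multiplying the inequality $r_i > r_{i+1}$ through by $(\pi_i - \pi_{i-1})(\pi_{i+1} - \pi_i)$, which is positive by Assumption \ref{assm:bids}(1), and using $\pi_{i+1} - \pi_{i-1} = (\pi_{i+1} - \pi_i) + (\pi_i - \pi_{i-1})$, one finds after rearrangement that $r_i > r_{i+1}$ is equivalent to
\[
 c_i(\pi_{i+1} - \pi_{i-1}) > c_{i+1}(\pi_i - \pi_{i-1}) + c_{i-1}(\pi_{i+1} - \pi_i),
\]
i.e.\ to $c_i > \alpha_i$ (see \eqref{eqn:alpha}), which is precisely the lower bound in Assumption \ref{assm:bids}(2). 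For the endpoint, $r_N > 0$ because $c_N > \alpha_N = c_{N-1}$ and $\pi_N > \pi_{N-1}$; combined with the ordering this gives $r_i > 0$ for all $i$. The upper bound $r_i < 1$ for all $i$ is the content of the upper bound $c_i < \beta_i$ in Assumption \ref{assm:bids}(2). Thus every $r_i$ lies in $(0,1)$, so $\hat{x}$ is well defined, and since $F^{-1}$ is strictly increasing, $\hat{x}_i = F^{-1}(r_i) - F^{-1}(r_{i+1}) > 0$ for $i < N$ while $\hat{x}_N = F^{-1}(r_N) > 0$.

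It remains to identify $\hat{x}$ with $x^*$. Telescoping, $\varphi_j(\hat{x}) = \sum_{k=j}^N \hat{x}_k = F^{-1}(r_j)$, hence $F(\varphi_j(\hat{x})) = r_j$ and $\sum_{j=1}^{i}(\pi_j - \pi_{j-1}) F(\varphi_j(\hat{x})) = \sum_{j=1}^{i}(c_j - c_{j-1}) = c_i$, using $c_0 = 0$. Substituting into \eqref{eqn:hxi} gives $\partial h/\partial x_i|_{\hat{x}} = 0$ for every $i$. Since $V$, and therefore $h$, is convex by Lemma \ref{lem:value}, a point with vanishing gradient is a global minimizer of $h$ over $\Re^N$; as $\hat{x} > 0$ it is feasible for the first-stage problem $\min_{x \ge 0} h(x)$, so $x^* = \hat{x}$ and $x_i^* > 0$ for all $i$.

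The step I expect to require the most care is the bookkeeping that establishes $r_i \in (0,1)$ for every $i$ at once — that Assumption \ref{assm:bids}(2) simultaneously pins down the strict ordering $r_1 > \cdots > r_N > 0$ (so that the increments $F^{-1}(r_i) - F^{-1}(r_{i+1})$ are genuinely positive) and the bound $r_i < 1$ (so that each $F^{-1}(r_i)$ stays inside the domain of $F^{-1}$ and is finite, ruling out an unbounded allocation). The algebraic equivalence $c_i > \alpha_i \Leftrightarrow r_i > r_{i+1}$ is a routine clearing of positive denominators, and the identification $\hat{x} = x^*$ is immediate from the telescoping identity together with convexity.
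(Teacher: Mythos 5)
Your proof is correct, and its arithmetic core coincides with the paper's: the proof in Appendix \ref{app:ci} also rests on the equivalences $c_i>\alpha_i \Leftrightarrow \frac{c_i-c_{i-1}}{\pi_i-\pi_{i-1}}>\frac{c_{i+1}-c_i}{\pi_{i+1}-\pi_i}$ (for $i<N$) and $c_N>\alpha_N=c_{N-1}$, followed by monotonicity of $F^{-1}$ applied to \eqref{eqn:xistar}. Where you differ is in the logical packaging: the paper starts from the equalities \eqref{eqn:optcond}, which were themselves derived under the hypothesis that $x^*$ is strictly positive, and then argues by backward induction that the resulting closed form has positive entries; you instead treat the closed form as a candidate $\hat x$, check that the gradient \eqref{eqn:hxi} vanishes at $\hat x$, and invoke convexity of $h$ (via Lemma \ref{lem:value}) to conclude $\hat x = x^*$, which removes that mild circularity and also obliges you to verify that each ratio $r_i=(c_i-c_{i-1})/(\pi_i-\pi_{i-1})$ lies in $(0,1)$ so that $F^{-1}(r_i)$ is finite --- a point the paper's proof leaves implicit. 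One caution: your assertion that $c_i<\beta_i$ ``is'' the bound $r_i<1$ matches the definition $\beta_i=c_{i-1}+(\pi_i-\pi_{i-1})$, which is evidently what the authors intend (it is consistent with $\beta_N=c_{N-1}+(\pi_N-\pi_{N-1})$ and with the remark that violating the upper bound yields an infinite allocation), but it is not the printed definition $\beta_i=c_{i+1}-(\pi_{i+1}-\pi_i)$; under the printed definition $c_i<\beta_i$ reads $r_{i+1}>1$, and Assumption \ref{assm:bids} would then be unsatisfiable for $N\ge 2$ (taking $i=N-1$ and $i=N$ forces $r_N>1$ and $r_N<1$ simultaneously). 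So your step is sound for the intended assumption, but you should state explicitly that you are reading $\beta_i$ in the corrected form rather than as displayed in \eqref{eqn:alpha}--\eqref{eqn:beta}.
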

\begin{proof}
Refer to Appendix \ref{app:ci}.
\end{proof}

This proves Theorem \ref{thm:stp}. We now turn our attention to the pricing scheme in the next section.

\section{The Pricing Scheme}\label{sec:price}
Recall that the LSEs bid their inverse demand functions to the renewable generator and the penalty for the shortfall is known. In such a situation, the LSEs have inherent incentives to misrepresent their true costs or penalty to maximize their profit. Thus, LSEs can be strategic in deciding on their bids. Under such a situation, it is desired to devise allocation and pricing policies jointly that can induce the LSEs to bid their true parameters. Here we show that the allocation policy determined in Theorem \ref{thm:stp}, together with a pricing policy based on Myerson's lemma \cite{myerson1981}, \cite{roughgarden2016}, induces truthful behavior in dominant strategy with respect to the willingness to pay parameter. 

To simplify the analysis, we first study the problem in which $\pi_1,\ldots,\pi_N$ are the non-strategic components of the LSEs' bids, and $c_1,\ldots,c_N$ are strategically chosen by the bidders to maximize their utilities. The motivation to study this problem is that $c_i$ captures the value generated by consumption of one unit of electricity by LSE $i$, $\pi_i$ would represent the amount LSE $i$ would pay to procure the shortfall from a traditional gas-fired or coal based generator to meet the demand.

For next result, let us write the optimal allocation $x^*$ as a function of the bids $c$. For clarity, let  $c_{-i} = (c_1,...,c_{i-1} ,c_{i+1},...,c_N)$, and we write $x_i^*(s,c_{-i})$ to denote the optimal allocation when LSE $i$ bids $s$ and all other LSEs bid $c_{-i}$. In the next lemma, we show that as an LSE increases its bid from $0$ to $\beta_i$, its optimal allocation is monotonically increasing. 
\begin{lemma}
\label{lem:mono}
For fixed $c_{-i}$, the optimal allocation $x^*_i(s,c_{-i})$ is monotonically increasing in $s \in (0,\beta_i)$ for all $1\leq i\leq N$.	
\end{lemma}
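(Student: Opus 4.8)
The plan is to bypass the closed form of Theorem~\ref{thm:stp} and argue by revealed preference (monotone comparative statics), so that the active set of the program never has to be examined. Fix $c_{-i}$ and, for $s\in(0,\beta_i)$, write
\beqq{J_s(z)\ :=\ \textstyle\sum_{j\neq i}c_jz_j\ +\ s\,z_i\ -\ V(z),\qquad z\ge 0,}
for the first-stage objective to be maximized (this is $-h$ with the $i$-th bid set to $s$). First I would record that, with $c_{-i}$ held at bids satisfying Assumption~\ref{assm:bids} and $s$ below the blow-up threshold $\beta_i$, the maximizer of $J_s$ over $z\ge 0$ exists, is finite, and is unique --- two optimizers would force $V$ to be affine on the segment joining them, which the structure of its Hessian (a positive combination of rank-one dyads built from linearly independent vectors) forbids unless the two points share the same vector $(\varphi_1(\cdot),\dots,\varphi_N(\cdot))$, hence coincide --- so that $x^*_i(s,c_{-i})$ is genuinely a function of $s$.

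The heart of the argument is a two-line exchange inequality. Take $0<s<s'<\beta_i$ and set $x:=x^*(s,c_{-i})$, $x':=x^*(s',c_{-i})$; each is feasible for both problems, since the feasible set $\{z\ge 0\}$ is independent of the bid. Optimality yields $J_s(x)\ge J_s(x')$ and $J_{s'}(x')\ge J_{s'}(x)$. Since $J_s$ and $J_{s'}$ differ only through the term $c_iz_i$, we have $J_s(z)-J_{s'}(z)=(s-s')z_i$ for all $z$; adding the two optimality inequalities and rearranging leaves
\beqq{(s-s')\big(x_i-x'_i\big)\ \ge\ 0 .}
As $s-s'<0$, this forces $x_i\le x'_i$, i.e. $x^*_i(s,c_{-i})\le x^*_i(s',c_{-i})$; since $s<s'$ in $(0,\beta_i)$ were arbitrary, $s\mapsto x^*_i(s,c_{-i})$ is monotonically increasing, as claimed. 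The argument is uniform in $i$, so the boundary indices $i=1$ and $i=N$ (with $c_0=\pi_0=0$ and the separate formula for $x_N^*$) need no special treatment.

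As a cross-check, and to locate where the increase is strict, one may instead read it off the closed form of Theorem~\ref{thm:stp}. The stationarity/complementary-slackness conditions \eqref{eqn:iff} give $x^*_i(s,c_{-i})\equiv 0$ for $s\in(0,\alpha_i]$: when $x^*_i=0$ one has $\varphi_i(x^*)=\varphi_{i+1}(x^*)$, the remaining equations fix $\varphi_{i+1}(x^*)$ independently of $s$, and $\partial h/\partial x_i|_{x^*}\ge 0$ collapses to $s\le\alpha_i$. For $s\in(\alpha_i,\beta_i)$ formula \eqref{eqn:xistar} reads $x^*_i(s,c_{-i})=F^{-1}\!\big(\tfrac{s-c_{i-1}}{\pi_i-\pi_{i-1}}\big)-F^{-1}\!\big(\tfrac{c_{i+1}-s}{\pi_{i+1}-\pi_i}\big)$, strictly increasing in $s$ because $F^{-1}$ is increasing, the first argument rises with $s$ and the second falls; the two pieces match at $s=\alpha_i$ by the definition \eqref{eqn:alpha} of $\alpha_i$. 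The one genuinely delicate point on this second route --- that no \emph{other} LSE leaves the active set as $c_i$ sweeps $(\alpha_i,\beta_i)$ --- is exactly what the revealed-preference proof avoids, which is why I would take the latter as the main argument and keep this computation as corroboration.
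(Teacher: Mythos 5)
Your argument is correct, but it is a genuinely different route from the paper's. The paper's proof (Appendix~\ref{app:mono}) works directly from the closed form \eqref{eqn:xistar}: it substitutes $c_i=s$, differentiates using $\frac{d}{dz}F^{-1}(z)=1/f(z)$, and observes that both resulting terms are positive because $\pi_{i-1}<\pi_i<\pi_{i+1}$ and $f>0$ on $(0,\infty)$, giving strict increase on $(\alpha_i,\beta_i)$, with $x_i^*=0$ for $s\leq\alpha_i$ handled via Lemma~\ref{lem:ci} and $i=N$ as a special case. Your main argument instead is a revealed-preference (monotone comparative statics) inequality on the first-stage concave program: adding the two optimality inequalities for $J_s$ and $J_{s'}$ and using $J_s(z)-J_{s'}(z)=(s-s')z_i$ correctly yields $(s-s')(x_i-x_i')\geq 0$, hence weak monotonicity of the true optimizer. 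What this buys is robustness: it never uses the closed form, so it does not rest on the implicit (and unverified, in the paper) premise that no other LSE's allocation hits zero or blows up as $s$ sweeps $(\alpha_i,\beta_i)$ -- you rightly flag this as the delicate point of the formula-based route, and your cross-check paragraph essentially reproduces the paper's proof. What it costs is twofold. First, you need existence and uniqueness of the maximizer so that $x_i^*(s,c_{-i})$ is well defined; your Hessian-dyad sketch is a bit loose at points where some $\varphi_j(x)=0$ (there $f(\varphi_j)=0$ and the Hessian can be singular), though the conclusion is salvageable since $V$ equals a sum of terms $(\pi_j-\pi_{j-1})\int_0^{\varphi_j(x)}F(t)\,dt$ and $F$ is strictly increasing on $[0,\infty)$, so $V$ is strictly convex in the invertible coordinates $\varphi(x)$. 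Second, your primary argument only gives non-strict monotonicity; that is exactly what Myerson's lemma requires, and indeed all the statement can mean on $(0,\beta_i)$ since $x_i^*\equiv 0$ on $(0,\alpha_i]$, but it does not by itself produce the explicit derivative $\frac{d}{ds}x_i^*(s,c_{-i})$ that the paper integrates in the proof of Theorem~\ref{thm:DSIC}; keeping your closed-form computation as corroboration supplies that ingredient (minor slip: the objectives differ through the term $s\,z_i$, not $c_i z_i$).
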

\begin{proof}
\mytest{The proof is given in \cite{dakhil2018cdc}.}{See Appendix \ref{app:mono}.}
\end{proof}

Myerson's Lemma \cite{myerson1981} states that for a monotonic allocation function, the payment rule given by the expression
\begin{align*}
p_i^*(c_i,c_{-i})=\int_{0}^{c_i}s \frac{d}{ds}x_i^*(s,c_{-i})ds, \quad 1\leq i\leq N
\end{align*}
leads to incentive compatibility in dominant strategies in a single-parameter auction where valuation is linear in the bidding parameter (which is the case here). We evaluate this integral in the following theorem.

\begin{theorem}
\label{thm:DSIC}
Given the allocation function $x^*(c_i,c_{-i})$, define the payment scheme $p_i^*$ as
\beq{p_i^*(c_i,c_{-i}) & = c_ix_i^*(c_i,c_{-i})\nonumber\\
&+(\pi_{i+1}-\pi_{i-1})G\bigg(F^{-1}\bigg(\frac{c_{i+1}-c_{i-1}}{\pi_{i+1}-\pi_{i-1}}\bigg)\bigg)\nonumber\\
&-(\pi_{i+1}-\pi_{i})G\bigg(F^{-1}\bigg(\frac{c_{i+1}-c_i}{\pi_{i+1}-\pi_{i}}\bigg)\bigg)\nonumber\\
 &-(\pi_{i}-\pi_{i-1})G\bigg(F^{-1}\bigg(\frac{c_{i}-c_{i-1}}{\pi_{i}-\pi_{i-1}}\bigg)\bigg),\label{eqn:pi}}
\beq{p_N^*(c_N,c_{-N}) &=c_Nx_N^*(c_N,c_{-N})\nonumber\\
&-(\pi_{N}-\pi_{N-1})G\bigg(F^{-1}\bigg(\frac{c_{N}-c_{N-1}}{\pi_{N}-\pi_{N-1}}\bigg)\bigg),\label{eqn:pN}}
where $G(\cdot)$ is defined in \eqref{eqn:Gz} and $1\leq i\leq N-1$. Then, the joint allocation payment scheme $(x^*,p^*)$ induces each LSE to bid truthfully in dominant strategies.
\end{theorem}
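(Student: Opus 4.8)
The plan is to verify Theorem \ref{thm:DSIC} by directly evaluating Myerson's integral
\[
p_i^*(c_i,c_{-i})=\int_0^{c_i} s\,\frac{d}{ds}x_i^*(s,c_{-i})\,ds,
\]
using the explicit formula for $x_i^*$ from \eqref{eqn:xistar}, and then invoking Myerson's Lemma (already quoted in the excerpt) together with the monotonicity established in Lemma \ref{lem:mono} to conclude dominant-strategy incentive compatibility. So the substance of the proof is just the computation of the integral; the incentive-compatibility conclusion is then immediate.

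First I would substitute the closed form
\[
x_i^*(s,c_{-i})=F^{-1}\!\Big(\frac{s-c_{i-1}}{\pi_i-\pi_{i-1}}\Big)-F^{-1}\!\Big(\frac{c_{i+1}-s}{\pi_{i+1}-\pi_i}\Big)
\]
(for $1\le i\le N-1$; the terminal case $i=N$ drops the second term) into the integral and differentiate with respect to $s$. Each term is of the form $F^{-1}$ of an affine function of $s$, so $\frac{d}{ds}F^{-1}(a+bs) = b\,(F^{-1})'(a+bs)$. Then I would perform integration by parts on $\int s\,dx_i^*$, writing it as $\big[s\,x_i^*(s,c_{-i})\big]_0^{c_i}-\int_0^{c_i} x_i^*(s,c_{-i})\,ds$. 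The boundary term at $s=c_i$ gives exactly $c_i x_i^*(c_i,c_{-i})$, the leading term in \eqref{eqn:pi}; the boundary term at $s=0$ vanishes because $s=0$ there. What remains is $-\int_0^{c_i} x_i^*(s,c_{-i})\,ds$, i.e. two (or one) integrals of $F^{-1}$ of affine functions of $s$.

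The key computational device is the substitution that turns $\int F^{-1}(a+bs)\,ds$ into something expressible through $G(z)=\int_0^z wf(w)\,dw$. Setting $z=F^{-1}(a+bs)$, so $s=\frac{F(z)-a}{b}$ and $ds = \frac{f(z)}{b}\,dz$, one gets $\int F^{-1}(a+bs)\,ds = \frac{1}{b}\int z f(z)\,dz = \frac{1}{b}\,G(z) = \frac{1}{b}G\big(F^{-1}(a+bs)\big)$ up to an additive constant that cancels between the two evaluation limits. Applying this to the first term (with $b = 1/(\pi_i-\pi_{i-1})$, so $1/b=\pi_i-\pi_{i-1}$) and to the second term (with $b=-1/(\pi_{i+1}-\pi_i)$, so $1/b=-(\pi_{i+1}-\pi_i)$, and the overall minus sign from $-\int x_i^*$ flips it), and evaluating at $s=c_i$ and $s=0$, produces precisely the three $G(F^{-1}(\cdot))$ terms in \eqref{eqn:pi}: the $s=c_i$ endpoint of the first term gives the $-(\pi_i-\pi_{i-1})G(F^{-1}((c_i-c_{i-1})/(\pi_i-\pi_{i-1})))$ term; the $s=0$ endpoint of the first term and the $s=c_i$ endpoint of the second term combine — using that at $s=0$ the first argument is $c_{i-1}/(\pi_i-\pi_{i-1})$... careful here, I would instead note that the two endpoint contributions at $s=0$ recombine into the single $+(\pi_{i+1}-\pi_{i-1})G(F^{-1}((c_{i+1}-c_{i-1})/(\pi_{i+1}-\pi_{i-1})))$ term, which is the one nontrivial algebraic merge in the whole argument. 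The case $i=N$ is the same computation with only the first term present, yielding \eqref{eqn:pN}.

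The main obstacle is precisely that bookkeeping of the four endpoint evaluations (two terms, two limits each) and recognizing that the $s=0$ contributions of the two $F^{-1}$ terms do not each simplify individually but must be added together before they collapse to the single $\pi_{i+1}-\pi_{i-1}$ coefficient seen in \eqref{eqn:pi} — this is where a sign error or a mismatched denominator would most easily creep in, and it is worth checking against the definition of $\alpha_i$ in \eqref{eqn:alpha}, since $\alpha_i$ is exactly the value of the bid at which $x_i^*$ hits zero and the integration effectively starts contributing. Once the integral is shown to equal the right-hand side of \eqref{eqn:pi}–\eqref{eqn:pN}, the theorem follows immediately: Lemma \ref{lem:mono} gives monotonicity of $x_i^*$ in $s$ on $(0,\beta_i)$, the utility is linear in the bid parameter $c_i$, and hence Myerson's Lemma applies verbatim to give dominant-strategy incentive compatibility.
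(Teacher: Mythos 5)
Your overall strategy---integration by parts, converting integrals of $F^{-1}$ into $G\circ F^{-1}$ via the substitution $z=F^{-1}(\cdot)$, and then invoking Myerson's lemma together with the monotonicity from Lemma \ref{lem:mono} for the incentive-compatibility conclusion---is the same as the paper's. However, the computation as you describe it has a genuine gap in the handling of the lower limit of integration. You substitute the closed form \eqref{eqn:xistar} for $x_i^*(s,c_{-i})$ over the entire interval $[0,c_i]$ and assert that the two antiderivative evaluations ``at $s=0$'' recombine into the term $(\pi_{i+1}-\pi_{i-1})G\big(F^{-1}\big(\tfrac{c_{i+1}-c_{i-1}}{\pi_{i+1}-\pi_{i-1}}\big)\big)$. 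That step fails: at $s=0$ the first argument is $\tfrac{-c_{i-1}}{\pi_i-\pi_{i-1}}<0$, so $G\circ F^{-1}$ is not even defined there, the second argument is $\tfrac{c_{i+1}}{\pi_{i+1}-\pi_i}$, and in any case the two arguments are unequal, so no merge into a single term occurs. The point is that \eqref{eqn:xistar} is the allocation only where it is positive; for $s\le\alpha_i$ the true allocation is identically zero (Lemma \ref{lem:ci} together with the optimality conditions), so $\tfrac{d}{ds}x_i^*(s,c_{-i})=0$ there and Myerson's integral reduces to $\int_{\alpha_i}^{c_i}$. This is precisely the first move in the paper's proof. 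The merge you want then happens at the endpoint $s=\alpha_i$: by \eqref{eqn:alpha}, $\tfrac{\alpha_i-c_{i-1}}{\pi_i-\pi_{i-1}}=\tfrac{c_{i+1}-\alpha_i}{\pi_{i+1}-\pi_i}=\tfrac{c_{i+1}-c_{i-1}}{\pi_{i+1}-\pi_{i-1}}$, which simultaneously makes the boundary term $s\,x_i^*(s,c_{-i})$ vanish at $s=\alpha_i$ (since $x_i^*(\alpha_i,c_{-i})=0$) and collapses the two $G\circ F^{-1}$ evaluations there into the single $(\pi_{i+1}-\pi_{i-1})$ term of \eqref{eqn:pi}. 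Your closing remark about $\alpha_i$ shows you sense this, but the computation you actually propose, with endpoints at $0$, would not produce \eqref{eqn:pi}.

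With that correction your argument matches the paper: the identity $\int_a^b F^{-1}(z)\,dz=G(F^{-1}(b))-G(F^{-1}(a))$ is exactly the paper's device, the case $i=N$ is the one-term analogue with effective lower limit $\alpha_N=c_{N-1}$, yielding \eqref{eqn:pN}, and for the DSIC conclusion the paper additionally verifies directly that $\tfrac{dU_i}{ds}=(c_i-s)\tfrac{dx_i^*}{ds}$ is positive on $(0,c_i)$ and negative on $(c_i,\beta_i)$ using Lemma \ref{lem:mono}; citing Myerson's lemma for a monotone single-parameter allocation, as you do, accomplishes the same thing.
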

\begin{proof}
The proof comprises evaluating the integral to compute the payment in accordance with Myerson's lemma and showing that no matter what other LSEs bid, the utility of LSE $i$ is maximized if it bids $c_i$.  
\mytest{A detailed proof is presented in \cite{dakhil2018cdc}.}{We refer to Appendix \ref{app:DSIC} for a detailed proof.}
\end{proof}

As one can observe in the expression for the payment scheme, each LSE pays the cost according to the product of its allocation and its reported valuation, but it receives a discount that is precisely based on bids of the preceding LSE and the succeeding LSE. 
\subsection{Individual Rationality of the Mechanism}
In this subsection, we prove that the LSEs are paying less than their valuation, which implies voluntary participation of players in the market -- this is the \textit{Individual Rationality} (IR) property in Definition \ref{def:IR}.  
\begin{lemma}\label{lem:plv}
For all $i$, the price player $i$ pays for electricity is not more than its valuation. that is,
\beqq{p_i^*(c_i,c_{-i})\leq c_ix_i^*(c_i,c_{-i}).}
\end{lemma}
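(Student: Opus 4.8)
The plan is to show that every term other than $c_i x_i^*(c_i,c_{-i})$ in the payment formulas \eqref{eqn:pi} and \eqref{eqn:pN} sums to a nonpositive quantity. Concretely, in the generic case $1 \le i \le N-1$ I must verify that
\beqq{(\pi_{i+1}-\pi_{i-1})G\!\left(F^{-1}\!\left(\tfrac{c_{i+1}-c_{i-1}}{\pi_{i+1}-\pi_{i-1}}\right)\right) \le (\pi_{i+1}-\pi_{i})G\!\left(F^{-1}\!\left(\tfrac{c_{i+1}-c_i}{\pi_{i+1}-\pi_{i}}\right)\right) + (\pi_{i}-\pi_{i-1})G\!\left(F^{-1}\!\left(\tfrac{c_{i}-c_{i-1}}{\pi_{i}-\pi_{i-1}}\right)\right),}
and for $i=N$ that $G\!\left(F^{-1}\!\left(\tfrac{c_N-c_{N-1}}{\pi_N-\pi_{N-1}}\right)\right) \ge 0$, which is immediate since $G(z)=\int_0^z wf(w)\,dw \ge 0$ for $z\ge 0$ and $F^{-1}$ of a value in $[0,1]$ is nonnegative (recall $f$ is supported on $(0,\infty)$). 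So the case $i=N$ is essentially free.

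For the main case, I would recognize the inequality as a convexity/Jensen statement. Set $a=\tfrac{c_i-c_{i-1}}{\pi_i-\pi_{i-1}}$, $b=\tfrac{c_{i+1}-c_i}{\pi_{i+1}-\pi_i}$, $\lambda=\tfrac{\pi_i-\pi_{i-1}}{\pi_{i+1}-\pi_{i-1}}$, so $1-\lambda=\tfrac{\pi_{i+1}-\pi_i}{\pi_{i+1}-\pi_{i-1}}$ and $\lambda a+(1-\lambda)b=\tfrac{c_{i+1}-c_{i-1}}{\pi_{i+1}-\pi_{i-1}}$ is exactly the argument on the left. Dividing through by $\pi_{i+1}-\pi_{i-1}>0$, the claimed inequality becomes
\beqq{(G\circ F^{-1})\big(\lambda a+(1-\lambda)b\big) \le \lambda\,(G\circ F^{-1})(a) + (1-\lambda)\,(G\circ F^{-1})(b),}
i.e. convexity of the composite function $\psi := G\circ F^{-1}$ on $[0,1]$. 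Thus the whole lemma reduces to the single analytic fact that $\psi$ is convex.

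To establish convexity of $\psi$, I would differentiate: with $z = F^{-1}(t)$, the chain rule gives $\psi'(t) = G'(z)/F'(z) = z f(z)/f(z) = z = F^{-1}(t)$, using $G'(z)=zf(z)$ and $F'(z)=f(z)$. Hence $\psi'(t)=F^{-1}(t)$, which is nondecreasing in $t$ because $F$ is (strictly) increasing, so $\psi$ is convex on $[0,1]$. (One should note the arguments $a,b$ indeed lie in $[0,1]$: nonnegativity follows from Assumption \ref{assm:bids} together with Lemma \ref{lem:ci}, since $x_i^*>0$ forces the relevant ratios to be valid outputs of $F$, hence in $[0,1)$.) Combining: for $i\le N-1$, $p_i^*(c_i,c_{-i}) = c_i x_i^* - (\pi_{i+1}-\pi_{i-1})\big[\lambda\psi(a)+(1-\lambda)\psi(b) - \psi(\lambda a+(1-\lambda)b)\big] \le c_i x_i^*$; for $i=N$, $p_N^* = c_N x_N^* - (\pi_N-\pi_{N-1})\psi\!\left(\tfrac{c_N-c_{N-1}}{\pi_N-\pi_{N-1}}\right) \le c_N x_N^*$.

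The only genuinely delicate point is the bookkeeping of the convex-combination identity — matching the three $G(F^{-1}(\cdot))$ arguments to $\lambda a+(1-\lambda)b$, $a$, and $b$ with the correct weights — and confirming all ratios are legitimate arguments of $\psi$ (i.e.\ in $[0,1]$); once the inequality is recast as convexity of $\psi$, the derivative computation $\psi'=F^{-1}$ finishes it cleanly. I do not anticipate any obstacle beyond this algebraic rearrangement.
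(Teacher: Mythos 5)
Your proposal is correct and follows essentially the same route as the paper: the paper's Appendix proof also reduces the claim to convexity of $G\circ F^{-1}$ (established there via the identity $G(F^{-1}(x))=\int_0^x F^{-1}(z)\,dz$, whose derivative is $F^{-1}$), applies it through the same convex-combination identity $\frac{c_{i+1}-c_{i-1}}{\pi_{i+1}-\pi_{i-1}}=\lambda\,\frac{c_i-c_{i-1}}{\pi_i-\pi_{i-1}}+(1-\lambda)\,\frac{c_{i+1}-c_i}{\pi_{i+1}-\pi_i}$, and disposes of $i=N$ by noting a nonnegative term is subtracted in \eqref{eqn:pN}. Your explicit check that the ratios lie in $[0,1)$ is a small added point of rigor, but the argument is the same.
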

\begin{proof}
The proof is a direct consequence of the fact that $G\circ F^{-1} :[0,1)\rightarrow[0,\infty)$ is a convex monotonically increasing function.  \mytest{We present the complete proof in \cite{dakhil2018arxiv}.}{We present the complete proof in Appendix \ref{app:plv}.}
\end{proof}
The following theorem proves that the mechanism $(x^*,y^*,p^*)$ is individually rational from the perspective of buyers.
\begin{theorem}\label{thm:IR}
The mechanism $(x^*,y^*, p^*)$ is individually rational, that is $U_i(c_i,c_{-i})\geq 0$ for all LSE $i$.
\end{theorem}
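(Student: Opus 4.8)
The plan is to read off individual rationality directly from the quasilinear form of the buyers' payoffs together with Lemma~\ref{lem:plv}. When the reported bids coincide with the true values $c$, the payoff of LSE~$i$ is $U_i(c_i,c_{-i}) = c_i x_i^*(c_i,c_{-i}) - p_i^*(c_i,c_{-i})$, so Lemma~\ref{lem:plv}, which asserts $p_i^*(c_i,c_{-i})\le c_i x_i^*(c_i,c_{-i})$, immediately gives $U_i(c_i,c_{-i})\ge 0$ for every $i$ — exactly the claim of the theorem. At this level there is essentially nothing left to do beyond invoking the lemma, so the substance is in the proof of Lemma~\ref{lem:plv}, and I describe how I would carry that out.

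First I would isolate the map $\Psi := G\circ F^{-1}$ on $[0,1)$ and show it is nonnegative, nondecreasing, and convex. Differentiating and using $G'(z)=zf(z)$ (from \eqref{eqn:Gz}) and $(F^{-1})'(u)=1/f(F^{-1}(u))$ gives $\Psi'(u)=F^{-1}(u)$, which is $\ge 0$ and nondecreasing in $u$ since $F$ is increasing with support in $(0,\infty)$; hence $\Psi$ is convex and increasing. Assumption~\ref{assm:bids}, via Lemma~\ref{lem:ci}, guarantees that every argument of $G\circ F^{-1}$ appearing in \eqref{eqn:pi}--\eqref{eqn:pN} lies in $[0,1)$, so $\Psi$ is well-defined on all the relevant points.

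The case $i=N$ is then immediate: from \eqref{eqn:pN}, $p_N^*-c_N x_N^* = -(\pi_N-\pi_{N-1})\,\Psi\!\big(\tfrac{c_N-c_{N-1}}{\pi_N-\pi_{N-1}}\big)\le 0$ because $\pi_N>\pi_{N-1}$ and $\Psi\ge 0$. For $1\le i\le N-1$, I would set $a=\tfrac{c_{i+1}-c_{i-1}}{\pi_{i+1}-\pi_{i-1}}$, $b=\tfrac{c_{i+1}-c_i}{\pi_{i+1}-\pi_i}$, $d=\tfrac{c_i-c_{i-1}}{\pi_i-\pi_{i-1}}$, and $\lambda=\tfrac{\pi_{i+1}-\pi_i}{\pi_{i+1}-\pi_{i-1}}\in(0,1)$. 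The key observation — a one-line algebraic check using $\pi_{i+1}-\pi_{i-1}=(\pi_{i+1}-\pi_i)+(\pi_i-\pi_{i-1})$ — is that $a=\lambda b+(1-\lambda)d$. Substituting this into \eqref{eqn:pi} yields
\[
p_i^*-c_i x_i^* = (\pi_{i+1}-\pi_{i-1})\big[\Psi(\lambda b+(1-\lambda)d)-\lambda\Psi(b)-(1-\lambda)\Psi(d)\big]\le 0,
\]
where the inequality is Jensen's inequality for the convex function $\Psi$ combined with $\pi_{i+1}>\pi_{i-1}$. This proves Lemma~\ref{lem:plv}, and the theorem follows as above.

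I do not expect a genuine obstacle here. The only points requiring care are the convexity and monotonicity of $\Psi$ and the barycentric identity $a=\lambda b+(1-\lambda)d$, both of which are short; the mildly subtle step is checking that the fractions $a,b,d$ really do lie in $[0,1)$ so that $F^{-1}$ is applicable, which is precisely where Assumption~\ref{assm:bids} is invoked.
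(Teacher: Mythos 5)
Your argument is correct and matches the paper's: Theorem~\ref{thm:IR} is deduced directly from Lemma~\ref{lem:plv} via the quasilinear payoff, and your sketch of Lemma~\ref{lem:plv} (convexity and monotonicity of $G\circ F^{-1}$, the barycentric identity corresponding to \eqref{eq:cvx}, and Jensen's inequality, with the $i=N$ case handled by the sign of the single subtracted term in \eqref{eqn:pN}) is exactly the paper's route in Appendix~\ref{app:plv}.
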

\begin{proof}
From lemma \ref{lem:plv}, we conclude that $U_i(c_i,c_{-i}) = c_ix_i^*(c_i,c_{-i}) -  p_i^*(c_i,c_{-i}) \geq 0$. Thus, the mechanism is individually rational.
\end{proof}

We just showed that flexible buyers lose nothing when they buy the random generation through the defined auction. 

Since the generator is taking the financial responsibility of compensating any possible shortfall in generation, it is possible for generators to make losses with some probability. Expected profit of the generator measures the long term average of losses and profits the generator expects to make.Next, we  study the expected payoff of the generator.  

\subsection{Budget Balancedness of the Mechanism}
Recall Definition \ref{def:BB}: A mecahnism is budget balanced if there is no net transfer from the mechanism to the players. In our case, this translates to the generator ending with nonnegative utility. The generator's expected payoff is
\beqq{U_0^*:=U_0^*(c)=\sum_{i=1}^N\big(p^*_i-\ex{\pi_iy_i^*(x^*;\omega)}\big).}
In the next theorem,  we present a lower bound on the profit of the generator.
\begin{theorem}\label{thm:bound}
Suppose that $F(x)$ is convex over $(0,x^*_N)$. Then, the generator's expected profit is lower bounded by
\beqq{&U_0^*\geq\sum_{i=1}^{N-1}\Bigg[ \big(c_i-\pi_iF(\varphi_i(x^*)\big)x_i^*(c_i,c_{-i})+\frac{\pi_{i-1}(c_i-\alpha_i)}{\pi_i-\pi_{i-1}}\\
&\times\bigg[F^{-1}\Big(\frac{c_{i+1}-c_{i-1}}{\pi_{i+1}-\pi_{i-1}}\Big)-F^{-1}\Big(\frac{c_{i+1}-c_i}{\pi_{i+1}-\pi_i}\Big)\bigg] \Bigg]\\
&+\Big[\frac{c_{N-1}\pi_N}{\pi_N-\pi_{N-1}}-\frac{c_N+c_{N-1}}{2}\frac{\pi_{N-1}}{\pi_N-\pi_{N-1}}\Big]x_N^*(c_N,c_{-N}).}
\end{theorem}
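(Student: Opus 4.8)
The plan is to start from the definition $U_0^* = \sum_{i=1}^N \big(p_i^*(c_i,c_{-i}) - \mathbb{E}[\pi_i y_i^*(x^*;\omega)]\big)$ and lower-bound the two ingredients separately: the payments $p_i^*$ via the explicit formulas \eqref{eqn:pi}--\eqref{eqn:pN} from Theorem \ref{thm:DSIC}, and the expected shortfall compensation $\sum_i \mathbb{E}[\pi_i y_i^*(x^*;\omega)]$, which is exactly $V(x^*)$ from Lemma \ref{lem:value}, evaluated at the optimal allocation. First I would substitute $p_i^*$ from \eqref{eqn:pi}, so that $U_0^*$ becomes $\sum_i c_i x_i^* + (\text{a telescoping-type sum of } G\circ F^{-1} \text{ terms}) - V(x^*)$. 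The $G\circ F^{-1}$ terms have arguments $\frac{c_{j}-c_{j-1}}{\pi_j-\pi_{j-1}}$, which by the optimality conditions \eqref{eqn:optcond} equal $F(\varphi_j(x^*))$; so each $G(F^{-1}(\cdot))$ collapses to $G(\varphi_j(x^*))$, matching the structure of the terms appearing in $V(x^*)$ in \eqref{eqn:value}.

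Next I would carry out the cancellation between the collapsed payment terms and $V(x^*)$. Using $F(\varphi_j(x^*)) = \frac{c_j - c_{j-1}}{\pi_j - \pi_{j-1}}$ throughout \eqref{eqn:value}, the $F$-terms in $V(x^*)$ become affine in the $c_j$'s, and the $G$-terms should cancel against the $G$-terms coming from the payments, up to a residual that I expect to reorganize into the per-LSE expression $\big(c_i - \pi_i F(\varphi_i(x^*))\big) x_i^* + \frac{\pi_{i-1}(c_i-\alpha_i)}{\pi_i-\pi_{i-1}}\big[F^{-1}(\cdots) - F^{-1}(\cdots)\big]$ plus the special $i=N$ term. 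Here the definition of $\alpha_i$ in \eqref{eqn:alpha} is the natural combination that appears: note $\frac{c_{i+1}-c_{i-1}}{\pi_{i+1}-\pi_{i-1}}$ is exactly $F(\varphi_i(x^*))$ in disguise once the optimality conditions are used (since $c_{i+1}-c_{i-1} = (c_{i+1}-c_i) + (c_i - c_{i-1})$ splits across the two consecutive optimality relations), and $\alpha_i$ is the value of $c_i$ that would make this consistent — so $(c_i - \alpha_i)$ measures the "gap" that generates positive profit.

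The remaining work is to show the right-hand side is actually a \emph{lower} bound, not an equality — i.e. to identify where the inequality is lost. This is where the hypothesis that $F$ is convex on $(0,x_N^*)$ enters: after the algebraic reduction, $U_0^*$ will equal the claimed RHS plus correction terms involving $G(\varphi_{i+1}(x^*)) - G(\varphi_i(x^*))$ against linear interpolations of $F$ between consecutive $\varphi$ values, and convexity of $F$ (equivalently, a bound like $G(b) - G(a) \ge$ or $\le$ the trapezoidal estimate $\frac{a+b}{2}(F(b)-F(a))$, using $G(z) = \int_0^z w f(w)\,dw$ and integration by parts to write $G(b)-G(a) = bF(b) - aF(a) - \int_a^b F$) lets me sign those corrections. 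The main obstacle is precisely this bookkeeping: keeping the nested index shifts straight across \eqref{eqn:value}, \eqref{eqn:pi}, and \eqref{eqn:optcond}, and pinning down exactly which trapezoidal/convexity inequality on each subinterval $(\varphi_{i+1}(x^*),\varphi_i(x^*))$ is needed and in which direction. I would handle the $i = N$ boundary term separately since $\varphi_{N+1} := 0$ and $\alpha_N, \beta_N$ are defined by different formulas, and there the convexity of $F$ on all of $(0, x_N^*)$ is used to bound $G(x_N^*) = G(\varphi_N(x^*))$ by its trapezoidal estimate against $F(0)=0$ and $F(x_N^*)$, producing the $\frac{c_N+c_{N-1}}{2}$ average in the stated bound.
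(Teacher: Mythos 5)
Your skeleton is the paper's: decompose $U_0^* = \sum_i\big(p_i^* - \mathbb{E}[\pi_i y_i^*(x^*;\omega)]\big)$, plug in the Myerson payments \eqref{eqn:pi}--\eqref{eqn:pN} and the closed form of $\mathbb{E}[y_i^*(x^*;\omega)]$ (equation \eqref{eq:Ey1}), use the stage-1 optimality conditions \eqref{eqn:optcond} to replace $F(\varphi_i(x^*))$ by $\frac{c_i-c_{i-1}}{\pi_i-\pi_{i-1}}$, and treat $i=N$ separately with a triangle/trapezoid estimate under convexity of $F$ on $(0,x_N^*)$; that last part is exactly the paper's argument. But there is a genuine gap at the one nontrivial step, which you yourself leave open (``pinning down exactly which trapezoidal/convexity inequality \dots is needed''). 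First, the $G\circ F^{-1}$ terms do \emph{not} all collapse to $G(\varphi_j(x^*))$: the payment \eqref{eqn:pi} contains $G\big(F^{-1}\big(\tfrac{c_{i+1}-c_{i-1}}{\pi_{i+1}-\pi_{i-1}}\big)\big)$, and $\tfrac{c_{i+1}-c_{i-1}}{\pi_{i+1}-\pi_{i-1}}$ is \emph{not} ``$F(\varphi_i(x^*))$ in disguise''; by \eqref{eqn:optcond} it equals the strict convex combination $\mu_i F(\varphi_i(x^*)) + (1-\mu_i)F(\varphi_{i+1}(x^*))$ with $\mu_i=\tfrac{\pi_i-\pi_{i-1}}{\pi_{i+1}-\pi_{i-1}}$, so $F^{-1}$ of it is an interior point of $(\varphi_{i+1}(x^*),\varphi_i(x^*))$ and cancels against nothing in $\mathbb{E}[\pi_i y_i^*]$. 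After the bookkeeping, the per-LSE residual whose sign carries the whole theorem is $\pi_{i+1}\big[G(F^{-1}(\rho_{cvx_i}))-G(F^{-1}(\rho_{1_i}))\big]+\pi_{i-1}\big[G(F^{-1}(\rho_{2_i}))-G(F^{-1}(\rho_{cvx_i}))\big]-\pi_iF^{-1}(\rho_{2_i})(\rho_{2_i}-\rho_{1_i})$ in the notation \eqref{eq:rhos}.

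Second, your proposed mechanism for signing that residual---convexity of $F$ giving trapezoidal bounds on each interval $(\varphi_{i+1}(x^*),\varphi_i(x^*))$---is not available: the hypothesis only gives convexity of $F$ on $(0,x_N^*)$, and for $i\le N-1$ those intervals lie \emph{above} $x_N^*$, where nothing is assumed (for the Weibull example of Section \ref{sec:simulation}, $F$ is eventually concave there). The inequality the paper actually uses for $i\le N-1$ needs no convexity of $F$ at all: it is the unconditional convexity of $G\circ F^{-1}$ (Lemma \ref{lem:GoF}, whose derivative is $F^{-1}$), applied as the supporting-line bounds $G(F^{-1}(\rho_{cvx_i}))-G(F^{-1}(\rho_{1_i}))\ge(\rho_{cvx_i}-\rho_{1_i})F^{-1}(\rho_{1_i})$ and $G(F^{-1}(\rho_{2_i}))-G(F^{-1}(\rho_{cvx_i}))\ge(\rho_{2_i}-\rho_{cvx_i})F^{-1}(\rho_{cvx_i})$, then combined with $\pi_i=\mu_i\pi_{i+1}+(1-\mu_i)\pi_{i-1}$ and $(1-\mu_i)(\rho_{2_i}-\rho_{1_i})=\tfrac{c_i-\alpha_i}{\pi_i-\pi_{i-1}}$ to produce the stated per-LSE bound; convexity of $F$ enters only for the $N$-th term, exactly as you handle it. As written, your plan would either stall at the residual or require a convexity assumption on $F$ over the whole support that the theorem does not make, so you need to replace the ``trapezoidal estimate between consecutive $\varphi$ values'' step with the $G\circ F^{-1}$ argument.
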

\vspace{0.5em}
\begin{proof}
We break the process of bounding $U_0^*$ by lower bounding $p_i^*-\pi_i^*\ex{y_i^*(x^*;\omega)}$ for every $1\leq i\leq N$ in Lemma \ref{lem:bound}.
\end{proof}

\begin{lemma}\label{lem:bound}
Under the mechanism $(x^*,y^*,p^*)$, the generator's expected payoff from each LSE $i\in\{1,\ldots,N-1\}$ is lower bounded by
\small{
\beqq{& p_i^*(c_i,c_{-i}) - \ex{\pi_iy_i^*(x^*;\omega)}\geq \big(c_i-\pi_iF(\varphi_i(x^*)\big)x_i^*(c_i,c_{-i})\\
& +\frac{\pi_{i-1}(c_i-\alpha_i)}{\pi_i-\pi_{i-1}}\bigg[F^{-1}\Big(\frac{c_{i+1}-c_{i-1}}{\pi_{i+1}-\pi_{i-1}}\Big)-F^{-1}\Big(\frac{c_{i+1}-c_i}{\pi_{i+1}-\pi_i}\Big)\bigg].}}
Moreover, if $F(x)$ is convex over $(0,x^*_N)$, then
\beqq{&p_N^*(c_N,c_{-N}) - \ex{\pi_N y_N^*(x^*;\omega)}\geq\\
&\Big[\frac{\pi_Nc_{N-1}}{\pi_N-\pi_{N-1}}-\frac{c_N+c_{N-1}}{2}\frac{\pi_{N-1}}{\pi_N-\pi_{N-1}}\Big]x_N^*(c_N,c_{-N}).}
\end{lemma}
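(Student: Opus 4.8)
The plan is to prove Lemma~\ref{lem:bound} by computing $\ex{\pi_i y_i^*(x^*;\omega)}$ exactly and then subtracting it from the payment expression $p_i^*$ given in Theorem~\ref{thm:DSIC}. First I would use the closed form \eqref{eq:yistar} for the optimal shortfall together with the optimality condition \eqref{eqn:optcond}, which at the optimal allocation gives $\varphi_i(x^*) = F^{-1}\big((c_i - c_{i-1})/(\pi_i - \pi_{i-1})\big)$ and $\varphi_{i+1}(x^*) = F^{-1}\big((c_{i+1} - c_i)/(\pi_{i+1} - \pi_i)\big)$. Plugging these into the expression for $Q(x;\omega)$ in \eqref{eqn:Qxw} and taking the expectation via Lemma~\ref{lem:value}, the term $\ex{\pi_i y_i^*(x^*;\omega)}$ reduces to a combination of $F$ and $G$ evaluated at these two inverse-CDF points; integrating by parts using the identity $\int_0^z F(w)\,dw = zF(z) - G(z)$ lets me rewrite everything in terms of $G\circ F^{-1}$ and the allocation $x_i^*$. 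Doing the same substitution in $p_i^*$ from \eqref{eqn:pi}, the three $G(F^{-1}(\cdot))$ terms there partially cancel against the expectation, leaving a residual expressed through the secant/tangent comparison of the convex map $G\circ F^{-1}$.

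For $1 \leq i \leq N-1$ the key step is to lower bound the leftover. After cancellation I expect the difference $p_i^* - \ex{\pi_i y_i^*(x^*;\omega)}$ to equal $\big(c_i - \pi_i F(\varphi_i(x^*))\big) x_i^* = \big(c_i - \pi_i F(\varphi_i(x^*))\big) x_i^*$ plus a term of the form $\pi_{i-1}\big[ (G\circ F^{-1})(t_{i-1}) - (G\circ F^{-1})(t_i) - (G\circ F^{-1})'(t_i)(t_{i-1} - t_i)\big]$ for appropriate arguments $t_{i-1} > t_i$, where I have used that $(G\circ F^{-1})' (y) = F^{-1}(y)$. Since $G\circ F^{-1}$ is convex (this is exactly the fact cited in the proof of Lemma~\ref{lem:plv}), this bracketed quantity is nonnegative, and bounding it below by the linear estimate $F^{-1}(t_i)\cdot(t_{i-1}-t_i)$ — or more precisely keeping the first-order term — yields exactly $\frac{\pi_{i-1}(c_i - \alpha_i)}{\pi_i - \pi_{i-1}}\big[F^{-1}\big(\frac{c_{i+1}-c_{i-1}}{\pi_{i+1}-\pi_{i-1}}\big) - F^{-1}\big(\frac{c_{i+1}-c_i}{\pi_{i+1}-\pi_i}\big)\big]$ once I unfold the definition \eqref{eqn:alpha} of $\alpha_i$, because $\frac{c_i - \alpha_i}{\pi_i - \pi_{i-1}}$ is precisely the gap between the two relevant arguments of $F^{-1}$.

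For the terminal LSE $i = N$, the shortfall $y_N^*$ behaves differently: $\varphi_N(x^*) = x_N^* = F^{-1}\big((c_N - c_{N-1})/(\pi_N - \pi_{N-1})\big)$ and $\varphi_{N+1} = 0$, so $\ex{\pi_N y_N^*(x^*;\omega)}$ involves $G$ evaluated only at $x_N^*$ and at $0$. Here I would invoke the extra hypothesis that $F$ is convex on $(0, x_N^*)$: this gives the bound $G(x_N^*) = \int_0^{x_N^*} w f(w)\,dw \leq \frac{x_N^*}{2}\big(F(x_N^*) - \ldots\big)$ — concretely, convexity of $F$ lets me estimate $\int_0^{x_N^*} F(w)\,dw \geq$ (or $\leq$, depending on orientation) the trapezoidal value $\frac{x_N^*}{2}F(x_N^*)$, which after combining with $p_N^*$ from \eqref{eqn:pN} and the identity $\int_0^z F = zF(z) - G(z)$ produces the stated bound with the $\frac{c_N + c_{N-1}}{2}$ average. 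Assembling the per-LSE bounds over $i = 1,\ldots,N$ and summing then immediately gives Theorem~\ref{thm:bound}.

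The main obstacle I anticipate is the bookkeeping in the cancellation between $p_i^*$ and $\ex{\pi_i y_i^*}$: both are sums of several $G(F^{-1}(\cdot))$ terms with shifted indices, and one has to be careful that the expectation of $y_i^*$ picks up contributions only from the events $\{k_x(\omega) \geq i\}$, so the telescoping in \eqref{eqn:Qxw} must be tracked index by index. Getting the convexity inequality pointed in the correct direction for the $i=N$ case — and matching the constant exactly to $\frac{c_N + c_{N-1}}{2}\frac{\pi_{N-1}}{\pi_N - \pi_{N-1}}$ rather than something weaker — is the other delicate point, since it is the only place the additional convexity-of-$F$ assumption is consumed.
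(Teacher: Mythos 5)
Your plan follows essentially the same route as the paper's proof: compute $\ex{y_i^*(x^*;\omega)}$ in closed form using \eqref{eq:yistar} and the optimality conditions (so that $\varphi_i(x^*)=F^{-1}(\rho_{2_i})$ and $\varphi_{i+1}(x^*)=F^{-1}(\rho_{1_i})$ with $\rho_{2_i}=\frac{c_i-c_{i-1}}{\pi_i-\pi_{i-1}}$, $\rho_{1_i}=\frac{c_{i+1}-c_i}{\pi_{i+1}-\pi_i}$), subtract from the Myerson payment, and use convexity of $G\circ F^{-1}$. One bookkeeping claim is off, though it would self-correct in the algebra: the exact residual for $i\leq N-1$ is not a single $\pi_{i-1}$-weighted Bregman term. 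Writing $h=G\circ F^{-1}$ and $\rho_{cvx_i}=\frac{c_{i+1}-c_{i-1}}{\pi_{i+1}-\pi_{i-1}}$, the cancellation gives $p_i^*-\ex{\pi_i y_i^*}=(c_i-\pi_i\rho_{1_i})x_i^*+\pi_{i+1}\bigl[h(\rho_{cvx_i})-h(\rho_{1_i})\bigr]+\pi_{i-1}\bigl[h(\rho_{2_i})-h(\rho_{cvx_i})\bigr]-\pi_iF^{-1}(\rho_{2_i})(\rho_{2_i}-\rho_{1_i})$, i.e.\ both a $\pi_{i+1}$- and a $\pi_{i-1}$-weighted increment of $h$ appear; the stated bound comes from applying the tangent inequality $h(b)-h(a)\geq F^{-1}(a)(b-a)$ to each increment and then using $\pi_i=\mu_i\pi_{i+1}+(1-\mu_i)\pi_{i-1}$ together with $\rho_{2_i}-\rho_{cvx_i}=\frac{c_i-\alpha_i}{\pi_i-\pi_{i-1}}$. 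Note also that $\frac{c_i-\alpha_i}{\pi_i-\pi_{i-1}}$ is the gap $\rho_{2_i}-\rho_{cvx_i}$, not the gap between the two arguments $\rho_{cvx_i}$ and $\rho_{1_i}$ that sit inside the final bracket, so the constant does not match for the reason you give.

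The genuine unresolved step is the direction of the convexity estimate at $i=N$, which is exactly where the extra hypothesis is consumed. The exact computation gives $p_N^*-\ex{\pi_N y_N^*}=c_Nx_N^*+\pi_{N-1}G(x_N^*)-\pi_N\rho_{2_N}x_N^*$ with $\rho_{2_N}=F(x_N^*)$, so $G(x_N^*)$ enters with the positive weight $\pi_{N-1}$ and you need a \emph{lower} bound on it; your first-stated inequality $G(x_N^*)\leq\cdots$ points the wrong way and yields nothing. The correct statement: since $F$ is convex on $(0,x_N^*)$ and $F(0)=0$, $F$ lies below its chord, so $\int_0^{x_N^*}F(w)\,dw\leq \tfrac12 x_N^*F(x_N^*)$ and hence $G(x_N^*)=x_N^*F(x_N^*)-\int_0^{x_N^*}F(w)\,dw\geq \tfrac12\rho_{2_N}x_N^*$ (equivalently, as in the paper, $F^{-1}$ is concave on $(0,\rho_{2_N})$ so its integral dominates the triangle area $\tfrac12\rho_{2_N}x_N^*$). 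Substituting this lower bound produces exactly the constant $\frac{\pi_Nc_{N-1}}{\pi_N-\pi_{N-1}}-\frac{c_N+c_{N-1}}{2}\frac{\pi_{N-1}}{\pi_N-\pi_{N-1}}$; without committing to this direction the $N$-th bound does not follow.
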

\vspace{0.5em}
\begin{proof}
The proof simply follows from some algebraic manipulations. First, the expected value of the optimal recourse variable, $y_i^*(x;\omega)$ is computed to be
\beq{\ex{y_i^*(x;\omega)}&=x_iF\big(\varphi_{i+1}(x)\big)+\varphi_{i}(x)F\big(\varphi_{i}(x)\big)\nonumber\\
&-\varphi_{i}(x)F\big(\varphi_{i+1}(x)\big)-G\big(\varphi_{i}(x)\big)\nonumber\\
&+G\big(\varphi_{i+1}(x)\big).\label{eq:Ey1}}
Using \eqref{eq:Ey1} and the payments from Theorem \ref{thm:DSIC}, and by exploiting the convexity of $G\circ F^{-1}$, we arrive at the lower bound. 
\mytest{We refer the reader to  \cite{dakhil2018arxiv} for a detailed proof.}{See Appendix \ref{app:bound} for the proof.}
\end{proof}
Therefore, the total expected payoff, i.e. the generator's expected profit, is bounded by the sum of the bounds in Lemma \ref{lem:bound}. Under certain additional assumptions on $(c_i, \pi_i)$, we can show that the expected profit is strictly positive.

\begin{lemma}\label{lem:pp}
If $c_i/\pi_i < c_{i-1}/\pi_{i-1}$ for all $2\leq i\leq N$, then $U_0^*> 0$. Consequently, the mechanism $(x^*,y^*,p^*)$ is budget balanced.
\end{lemma}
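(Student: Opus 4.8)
The plan is to show that each summand in the lower bound from Theorem \ref{thm:bound} is strictly positive under the hypothesis $c_i/\pi_i < c_{i-1}/\pi_{i-1}$, which by Theorem \ref{thm:bound} immediately gives $U_0^* > 0$ and hence budget balancedness by Definition \ref{def:BB}. First, I would recall that by Lemma \ref{lem:ci} every $x_i^*(c_i,c_{-i}) > 0$, so it suffices to show that each bracketed coefficient multiplying $x_i^*$ (together with the extra additive term for $i \le N-1$) is strictly positive.

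For the terminal index $N$, the coefficient of $x_N^*$ in Theorem \ref{thm:bound} is
\beqq{\frac{c_{N-1}\pi_N}{\pi_N-\pi_{N-1}}-\frac{c_N+c_{N-1}}{2}\cdot\frac{\pi_{N-1}}{\pi_N-\pi_{N-1}}
= \frac{1}{\pi_N-\pi_{N-1}}\Big(c_{N-1}\pi_N - \frac{c_N+c_{N-1}}{2}\pi_{N-1}\Big).}
Since $\pi_N > \pi_{N-1} > 0$ by Assumption \ref{assm:bids}(1), the denominator is positive, so I need $c_{N-1}\pi_N > \tfrac{1}{2}(c_N+c_{N-1})\pi_{N-1}$. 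The hypothesis applied to $i=N$ gives $c_N\pi_{N-1} < c_{N-1}\pi_N$, and trivially $c_{N-1}\pi_{N-1} < c_{N-1}\pi_N$; averaging these two inequalities yields exactly $\tfrac{1}{2}(c_N+c_{N-1})\pi_{N-1} < c_{N-1}\pi_N$, so the $N$-th term is strictly positive.

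For $1\le i\le N-1$ the contribution is $\big(c_i-\pi_i F(\varphi_i(x^*))\big)x_i^* + \tfrac{\pi_{i-1}(c_i-\alpha_i)}{\pi_i-\pi_{i-1}}\big[F^{-1}(\tfrac{c_{i+1}-c_{i-1}}{\pi_{i+1}-\pi_{i-1}}) - F^{-1}(\tfrac{c_{i+1}-c_i}{\pi_{i+1}-\pi_i})\big]$. By the optimality relation \eqref{eqn:optcond} (equivalently the first bullet after \eqref{eqn:iff}, valid since $x_i^*>0$) together with the telescoping identity for $\varphi_i(x^*)=x_i^*+\ldots+x_N^*$, one has $F(\varphi_i(x^*)) = \tfrac{c_i-c_{i-1}}{\pi_i-\pi_{i-1}}$, so $c_i - \pi_i F(\varphi_i(x^*)) = c_i - \tfrac{\pi_i(c_i-c_{i-1})}{\pi_i-\pi_{i-1}} = \tfrac{-\pi_{i-1}c_i + \pi_i c_{i-1}}{\pi_i-\pi_{i-1}} = \tfrac{\pi_{i-1}}{\pi_i-\pi_{i-1}}\big(\tfrac{\pi_i c_{i-1}}{\pi_{i-1}} - c_i\big)$, which is strictly positive precisely because $c_i/\pi_i < c_{i-1}/\pi_{i-1}$ (and for $i=1$ the term is simply $c_1 - \pi_1 F(x_1^*+\ldots+x_N^*) = c_1 - c_1 \cdot 0 \ge$ handled by $c_0=\pi_0=0$, giving $c_1>0$ directly). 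For the second, additive term I note $F^{-1}$ is strictly increasing and $\tfrac{c_{i+1}-c_{i-1}}{\pi_{i+1}-\pi_{i-1}} > \tfrac{c_{i+1}-c_i}{\pi_{i+1}-\pi_i}$ — this is a mediant-type inequality equivalent to $x_i^*>0$ via \eqref{eqn:xistar} — so the bracket is nonnegative, and $c_i - \alpha_i > 0$ by Assumption \ref{assm:bids}(2) while $\pi_{i-1}/(\pi_i-\pi_{i-1}) \ge 0$; hence the additive term is nonnegative and the whole $i$-th contribution is strictly positive. Summing over $i$ gives $U_0^* > 0$.

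The main obstacle I anticipate is bookkeeping at the boundary indices: the case $i=1$ where $c_0=\pi_0=\pi_{i-1}$ makes the additive term vanish and one must argue directly that $c_1 - \pi_1 F(\varphi_1(x^*)) > 0$ from $F(\varphi_1(x^*)) = c_1/\pi_1 < c_0/\pi_0$ being vacuous — so instead one uses $\varphi_1(x^*) = F^{-1}((c_1-c_0)/(\pi_1-\pi_0)) = F^{-1}(c_1/\pi_1)$ and the fact that this expression and the penalty ordering still yield a strictly positive coefficient only if $c_1/\pi_1 < 1$, which must be checked to follow from Assumption \ref{assm:bids} — and the coupling at $i=N-1$ with the $N$-th term. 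Everything else is direct substitution of \eqref{eq:Ey1}, \eqref{eqn:xistar}, \eqref{eqn:optcond}, and the hypothesis, plus monotonicity of $F^{-1}$.
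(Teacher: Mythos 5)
Your overall route is the same as the paper's: start from the lower bound of Theorem \ref{thm:bound} (equivalently Lemma \ref{lem:bound}), use the optimality condition \eqref{eqn:optcond} to write $F(\varphi_i(x^*)) = (c_i-c_{i-1})/(\pi_i-\pi_{i-1})$ so that the hypothesis $c_i\pi_{i-1} < c_{i-1}\pi_i$ makes $c_i - \pi_i F(\varphi_i(x^*))$ positive, and use $c_i > \alpha_i$ together with monotonicity of $F^{-1}$ to make the additive term nonnegative. Your explicit verification that the $N$-th coefficient is positive, by averaging $c_N\pi_{N-1} < c_{N-1}\pi_N$ with $c_{N-1}\pi_{N-1} < c_{N-1}\pi_N$, is correct and is a step the paper leaves implicit.

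The one place where your argument goes wrong is $i=1$. By \eqref{eqn:optcond} with $c_0=\pi_0=0$ we have $F(\varphi_1(x^*)) = c_1/\pi_1$ exactly, so $c_1 - \pi_1 F(\varphi_1(x^*)) = 0$; it is not $c_1 - c_1\cdot 0 = c_1>0$ as you wrote, and no side condition such as $c_1/\pi_1 < 1$ can make this coefficient strictly positive, because it vanishes identically at the optimum. Likewise the additive term for $i=1$ carries the factor $\pi_0=0$, so the entire $i=1$ contribution to the lower bound is zero, not strictly positive; this is precisely why the lemma's hypothesis is stated only for $2\le i\le N$. The repair is a one-liner: the $i=1$ term is nonnegative (in fact zero), the terms $2\le i\le N-1$ are strictly positive by your mediant argument, and the $N$-th term is strictly positive by your averaging argument, so the sum, and hence $U_0^*$, is strictly positive (for $N\ge 2$). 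Note also that, as in the paper, the bound on the $N$-th term inherits the convexity assumption on $F$ over $(0,x_N^*)$ from Lemma \ref{lem:bound}, so that hypothesis is being used implicitly here as well.
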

\begin{proof}
Refer to Appendix \ref{app:pp}. 
\end{proof}

We evaluate the expected profit and the lower bound for the case where generation is distributed according to Weibull distribution in the next section. We note that our mechanism is budget balanced in expectation as we are concerned here with the expected profit. We also note that it is possible to accrue loss from some runs of the auction, though the long term average (of independent auctions) will always be positive.

\section{Numerical Simulation} \label{sec:simulation}
In this section, we simulate an instance of the auction where the bids are held fixed and the generation is assumed random. First, consider $\eta\in(0,1)$ and let
\beqq{c_i = \frac{1-\eta^i}{1-\eta} c_1,\quad  \pi_i = i\pi_1\quad \text{ for all } i\in\{1,\ldots,N\}.}
Let us define $\hat c = \frac{c_1}{\pi_1}<1$. With this definition, we get
\beqq{\frac{c_{i+1} - c_i}{\pi_{i+1} - \pi_i} = \frac{\eta^{i+1} - \eta^i}{1-\eta}\frac{c_1}{\pi_1} =  \eta^i \hat c. }
We note that  $(c,\pi)$ thus defined satisfy Assumption \ref{assm:bids}. The optimal allocation for LSE $i$, where $1 \leq i \leq N$, is
\beqq{x_i^* =  F^{-1}\Big(\eta^{i-1} \hat c\Big) - F^{-1}\Big(\eta^i \hat c\Big),\quad x_N^* =  F^{-1}\Big(\eta^{N-1} \hat c\Big).}
As for the source of renewable energy, we assume an aggregated wind generated electricity. Bradbury showed in \cite{bradbury2013} that such generation follows a Weibull distribution, which is characterized by a probability density function $f(w)$ of the form
\beqq{f(w) = \frac{k}{\lambda}\left( \frac{w}{\lambda} \right)^{k-1}\exp\left( - \left( \frac{w}{\lambda}\right)^k \right) \text{ for } w\geq 0,}
where $k$ is the shape parameter and $\lambda$ is the scale parameter, which is proportional to the mean power output. The corresponding cumulative distribution function is
\beqq{F(w) = 1-\exp\left( -\left( \frac{w}{\lambda} \right)^k \right) \text{ for } w\geq 0.}
In the same paper, it was shown that $k$ increases from $1.7$ to $3$ as the power output of more wind turbines are accumulated. For ease of exposition, we adopt $k=2$ in our computations. The corresponding $\lambda$ is found to be $1509$ kW, when the mean power is $1337$ kW.

In order to calculate the components $(x^*,p^*)$ of our DSIC mechanism , we need to evaluate both $F^{-1}(\rho)$ and $G(z)$ for the Weibull distribution. We note that $F(w)$ is a monotonically increasing function, therefore, it has an inverse function, which can be easily shown to be,
\beqq{F^{-1}(\rho)= \lambda\sqrt[k]{\ln\left( \frac{1}{1-\rho} \right)}= 1509\sqrt{\ln\left( \frac{1}{1-\rho} \right)}~ \text{kW}.}
In our case, $k=2$, computing $G(z)$ involves the evaluation of the following integral,
\beqq{G(z)&=\frac{2}{\lambda^2}\int_0^z  w^2e^{ - \left( \frac{w}{\lambda}\right)^2} dw=\lambda\int_0^{\frac{z^2}{\lambda^2}}\nu^{\frac{3}{2}-1}e^{-\nu}d\nu\\
&=\lambda\gamma\left(\frac{3}{2},\frac{z^2}{\lambda^2}\right),}
where we used the substitution $\nu=\frac{w^2}{\lambda^2}$, and $\gamma$ is the lower incomplete gamma function. This yields
\beqq{G\circ F^{-1} (\rho) = \lambda\gamma\left(\frac{3}{2},\frac{1}{1-\rho}\right).}

We simulate the case $N=5$ for different values of $\eta$ while fixing $\hat c=10/12$. Figure \ref{fig:alloc} and Figure \ref{fig:pay} illustrate how each buyer's allocation and payment change in response.

\begin{figure}[!ht]
\centering
  \includegraphics[width=0.8\linewidth]{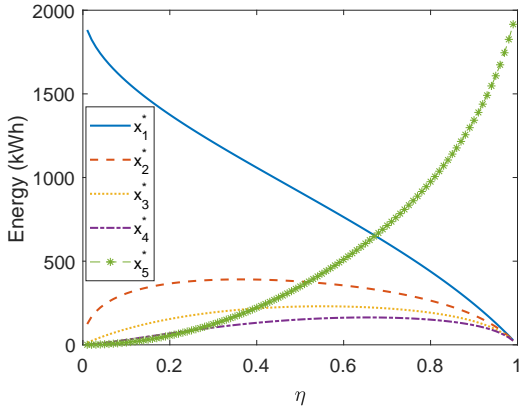}
  \caption{Allocation as a function of $\eta.$}
  \label{fig:alloc}
\end{figure}
\begin{figure}[!ht]
  \centering
  \includegraphics[width=0.8\linewidth]{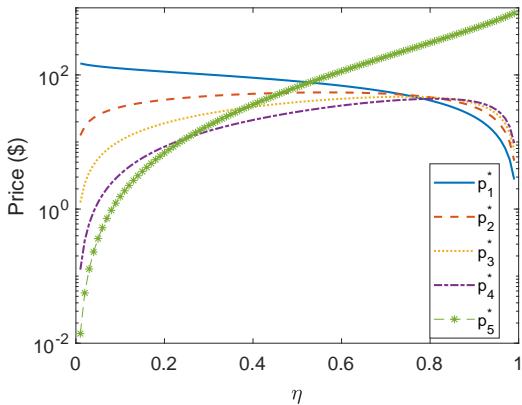}
  \caption{Myerson's Payments as a function of $\eta.$}
  \label{fig:pay}
\end{figure}

We note the contrary relationship between $x_1^*~ \text{and}~x_5^*$. When $\eta$ is low, the components of $c$ are close to each other, so more power is allocated to player 1, the cheapest player, and since most of the generation is allocated to him, not much is left to the other players. On the other hand, as $\eta$ increases, the components of $c$ spread out and the allocation assigns more energy to the LSEs with higher willingness to pay, in this case LSE 5. Yet, LSE 5, who has the highest valuation for the energy, has to pay far more than what player 1 has to pay per unit kWh. This higher payment makes up for the high expected compensation due to his high penalty $\pi_5$. This trend is clear in Figure \ref{fig:perunit} that shows the per unit prices of energy  for all the LSEs.
\begin{figure}[!ht]
\centering
  \includegraphics[width=0.8\linewidth]{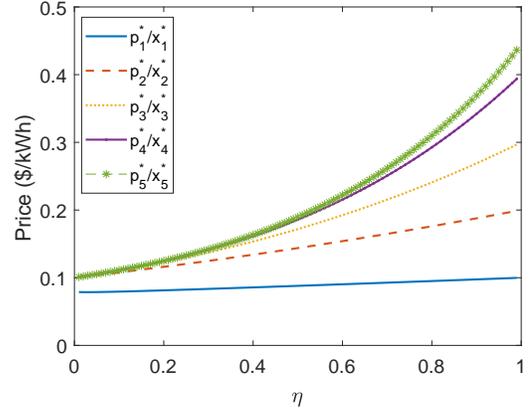}
  \caption{Prices per unit energy as a function of $\eta.$}
  \label{fig:perunit}
\end{figure}

It is worth mentioning that despite this non-uniform pricing, players are receiving a discount for their flexibility and their utilities are at their maximum by the DSIC feature of our proposed mechanism. We define the discount as 
\beqq{\text{discount} = 100\Big(\frac{c_ix_i^*-p_i^*}{c_ix_i^*}\Big)\%,
}The discounts are illustrated in Figure \ref{fig:discount}, where we see that even the expensive player is motivated by a decent amount of savings. Figure \ref{fig:surplus2} depicts the positive surplus of the players in this mechanism.
\begin{figure}[!ht]
\centering
  \includegraphics[width=0.8\linewidth]{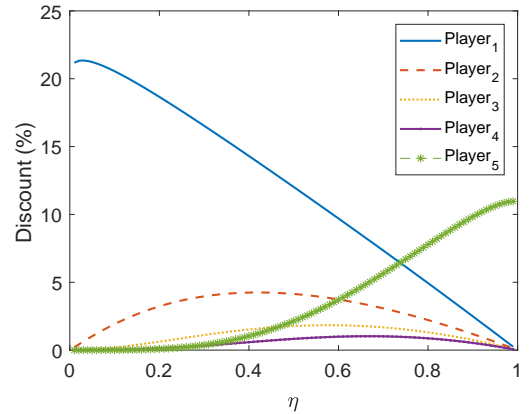}
  \caption{Discount as a function of $\eta$.}
  \label{fig:discount}
\end{figure}

\begin{figure}[!ht]
\centering
  \includegraphics[width=0.8\linewidth]{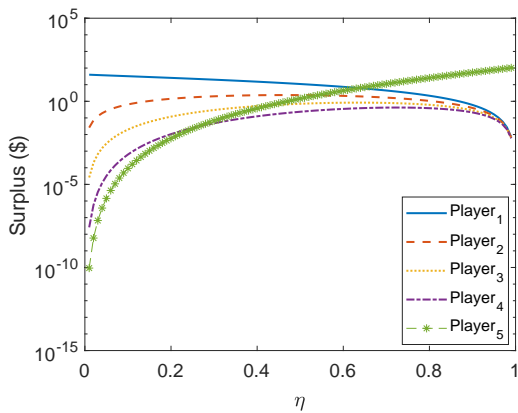}
  \caption{LSEs' utility as a function of $\eta$.}
  \label{fig:surplus2}
\end{figure}

As for the seller, the computation of the expected profit requires simulating an instance of the auction many times to find the expected compensation over independent runs of this auction. We observe that budget balancedness is achieved in expectation in Figure \ref{fig:BP}, where we compute the sample averaged expected profit for different values of $\eta$ when $\hat c =10/12$ along with the corresponding lower bound.
\begin{figure}[!ht]
\centering
  \includegraphics[width=0.8\linewidth]{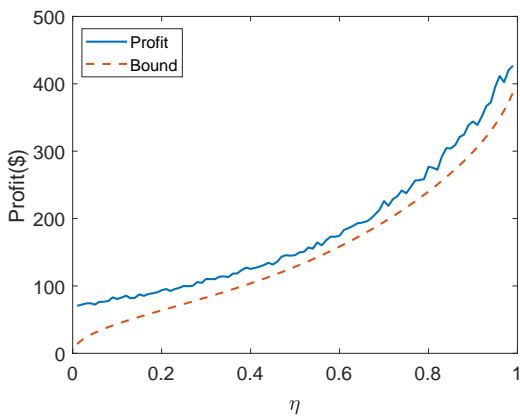}
  \caption{Generator's empirical average profit and lower bound on the profit for $\hat c =10/12$ as a function of $\eta$.}
  \label{fig:BP}
\end{figure}

\section{Conclusion}\label{sec:con}
Stochastic programming provides the generator with a powerful tool to make reliable decisions regarding how much electricity to contract for in the day ahead market even before knowing the actual generation in the real time market. There are many ways through which renewable electricity could be priced in this case. Motivated by considerations in storage and EV charging markets, we assumed a model in which some loads are open to the possibility of not consuming electricity at all if they are compensated appropriately. Using valuation bids from the LSEs, we devise an allocation and a corresponding payment scheme that yields not only truthful bidding in dominant strategies, but satisfies efficiency, individual rationality, and budget balancedness. Thus, we proved that LSEs would voluntarily participate in the proposed auction, and computed a lower bound on the profit of the generator. Further, we conducted numerical simulations under certain parametrized bids and with the assumption that the renewable generation is distributed according to Weibull distribution. The simulation results indicate that the consumer surplus is positive (although, for some LSEs, they are negligible), and the LSEs receive non-trivial discounts for their consumption.

\appendices
\section{Proof of Lemma \ref{lem:value}}\label{app:value}
By substituting the expression in \eqref{eqn:Qxw} in $V(x)=\mathbb{E}_{\omega}[Q(x;w(\omega))]$ and using the fact that the expectation of the indicator function is just the probability of the condition being true, we get
\begin{align}
\label{eq4}
V(x)&=\mathbb{E}_{\omega}[Q(x;w(\omega))] = \sum_{i=1}^N \pi_i \bigg [\mathbb{P} \lbrace k_x(\omega)>i\rbrace x_i  \nonumber \\
&+\mathbb{P}\lbrace k_x(\omega)=i\rbrace \varphi_i(x)\nonumber \\
& -\mathbb{E}_{\omega}[w(\omega)|k_x(\omega)=i]\mathbb{P}\lbrace k_x(\omega)=i\rbrace \bigg].
\end{align}
To compute the probabilities in \eqref{eq4}, we assume that wind generation has a smooth and continuous cumulative distribution function $F(w)=\mathbb{P}\lbrace w(\omega)\leq w\rbrace$ that is everywhere differentiable. Using the definition of $k_x(\omega)$ in \eqref{eqn:kxomega}, we get
\begin{align}
\label{eq5}
\mathbb{P}\lbrace k_x(\omega)=i\rbrace &= \mathbb{P}\bigg\lbrace \varphi_{i+1}(x)\leq  w(\omega)< \varphi_{i}(x) \bigg\rbrace \nonumber\\
&= F\bigg(\varphi_{i}(x)\bigg)-F\bigg(\varphi_{i+1}(x)\bigg),
\end{align}
for $1\leq i \leq N$, where we use the convention that a summation with lower limit greater than the upper limit is zero and $F(0)=0$.
Also,
\begin{align}
\label{eq6}
\mathbb{P}\lbrace k_x(\omega)>i\rbrace &=\sum_{j=i+1}^N\mathbb{P}\lbrace k_x(\omega)=j\rbrace \nonumber \\
&=\sum_{j=i+1}^N \bigg( F\big(\varphi_{j}(x)\big)-F\big(\varphi_{j+1}(x) \big)\bigg) \nonumber\\
&= F(\varphi_{i+1}(x)),
\end{align}
where all except the first term cancel each other out. Note that $\mathbb{P}\lbrace k_x(\omega)>N\rbrace=0$. 
To evaluate the last probabilistic term in \eqref{eq4}, we use the function $G$, defined in \eqref{eqn:Gz}. Now we obtain the desired concise expression
\begin{align}
\label{eq7}
& \mathbb{E}_{\omega}[w(\omega)|k_x(\omega)=i]\mathbb{P}\lbrace k_x(\omega)=i\rbrace\nonumber\\
&=\int_{\varphi_{i+1}(x)}^{\varphi_{i}(x)}w(\omega)f(\omega)d\omega =G\big(\varphi_{i}(x)\big)-G\big(\varphi_{i+1}(x)\big),
\end{align}
for $1\leq i \leq N$. The expression in the lemma follows immediately when we substitute \eqref{eq5}, \eqref{eq6}, and \eqref{eq7} in \eqref{eq4}.

We now prove that $V(x)$ is convex. Recall that $Q(x;\omega) = \pi^T y^*(x;\omega)$. By evaluating this expression (given in Theorem \ref{thm:stp}, we conclude that $y^*$ is linear in $w(\omega)$. 

We now show the convexity of $Q(\cdot,\omega)$ for a realization of the wind generation $w(\omega)$. Pick $x_1,x_2\in\Re^N$ such that $x_1,x_2\geq 0$. Define $y_1^*:=y_1^*(x_1;\omega)$ and $y_2^*:=y_2^*(x_2;\omega)$ as the optimal solutions associated with $Q(x_1;\omega)$ and $Q(x_2;\omega)$, respectively, then the vector $y=\alpha y_1^*+(1-\alpha)y_2^*$, where $\alpha \in [0,1]$, turns out to be a feasible vector that satisfies the constraints of the optimization problem in \eqref{eqn:stp} when $x=\alpha x_1+(1-\alpha) x_2$. Therefore, the value of the cost function at $y$ is greater than or equal to the optimal value $Q(x;\omega)$ and the following equation proves convexity,
\begin{equation*}
\begin{aligned}
Q(x;\omega)&=Q(\alpha x_1+(1-\alpha) x_2; \omega)\\
&\leq \pi^Ty\\
&=\pi^T(\alpha y_1^*+(1-\alpha)y_2^*)\\
&=\alpha\pi^T y_1^*+(1-\alpha)\pi^Ty_2^*\\
&=\alpha Q(x_1;\omega)+(1-\alpha)Q(x_2;\omega).
\end{aligned}
\end{equation*}
Indeed, the convexity of $\mathbb{E}_{\omega}[Q(x;\omega)]$ follows immediately by the linearity of the expectation operator, which preserves convexity.

\begin{figure*}[!t]
\begin{align*}
p_i^*(c_i,c_{-i})&=\int_{0}^{c_i}s \frac{d}{ds}x_i^*(s,c_{-i})ds=\int_{\alpha_i}^{c_i}s \frac{d}{ds}x_i^*(s,c_{-i})ds\\
&=\frac{1}{(\pi_i-\pi_{i-1})}\int_{\alpha_i}^{c_i}\frac{s}{f\bigg(\frac{s-c_{i-1}}{\pi_i-\pi_{i-1}}\bigg)}ds+\frac{1}{(\pi_{i+1}-\pi_{i})}\int_{\alpha_i}^{c_i}\frac{s}{f\bigg(\frac{c_{i+1}-s}{\pi_{i+1}-\pi_{i}}\bigg)}ds\\
&=\frac{1}{(\pi_i-\pi_{i-1})}\bigg[(\pi_i-\pi_{i-1})sF^{-1}\bigg(\frac{s-c_{i-1}}{\pi_i-\pi_{i-1}}\bigg)-(\pi_i-\pi_{i-1})\int F^{-1}\bigg(\frac{s-c_{i-1}}{\pi_i-\pi_{i-1}}\bigg)ds \bigg]\bigg\rvert_{\alpha_i}^{c_i}\\
&\qquad\qquad -\frac{1}{(\pi_{i+1}-\pi_{i})}\bigg[(\pi_{i+1}-\pi_{i})sF^{-1}\bigg(\frac{c_{i+1}-s}{\pi_{i+1}-\pi_{i}}\bigg)-(\pi_{i+1}-\pi_{i})\int F^{-1}\bigg(\frac{c_{i+1}-s}{\pi_{i+1}-\pi_{i}}\bigg)ds \bigg]\bigg\rvert_{\alpha_i}^{c_i}\\
&=\bigg[sF^{-1}\bigg(\frac{s-c_{i-1}}{\pi_i-\pi_{i-1}}\bigg)-(\pi_i-\pi_{i-1})G\bigg(F^{-1}\bigg(\frac{s-c_{i-1}}{\pi_{i}-\pi_{i-1}}\bigg)\bigg) \bigg]\bigg\rvert_{\alpha_i}^{c_i}\\
&\qquad\qquad  -\bigg[sF^{-1}\bigg(\frac{c_{i+1}-s}{\pi_{i+1}-\pi_{i}}\bigg)+(\pi_{i+1}-\pi_{i})G\bigg(F^{-1}\bigg(\frac{c_{i+1}-s}{\pi_{i+1}-\pi_{i}}\bigg)\bigg) \bigg]\bigg\rvert_{\alpha_i}^{c_i}\\
&=c_i\bigg[F^{-1}\bigg(\frac{c_i-c_{i-1}}{\pi_i-\pi_{i-1}}\bigg)-F^{-1}\bigg(\frac{c_{i+1}-c_{i}}{\pi_{i+1}-\pi_{i}}\bigg)\bigg]+(\pi_{i+1}-\pi_{i-1})G\bigg(F^{-1}\bigg(\frac{c_{i+1}-c_{i-1}}{\pi_{i+1}-\pi_{i-1}}\bigg)\bigg)\\
& \qquad \qquad -(\pi_{i+1}-\pi_{i})G\bigg(F^{-1}\bigg(\frac{c_{i+1}-c_i}{\pi_{i+1}-\pi_{i}}\bigg)\bigg) -(\pi_{i}-\pi_{i-1})G\bigg(F^{-1}\bigg(\frac{c_{i}-c_{i-1}}{\pi_{i}-\pi_{i-1}}\bigg)\bigg).
% & = c_ix_i^*(c_i,c_{-i})-(\pi_{i+1}-\pi_{i-1})G(F^{-1}\bigg(\frac{c_{i+1}-s}{\pi_{i+1}-\pi_{i}}\bigg)).
\end{align*}
\caption{\label{fig:ex} See Appendix \ref{app:DSIC}. Here, we used integration by parts after the fourth equality.}
\line(1,0){505}
\end{figure*}

\section{Proof of Lemma \ref{lem:ci}}\label{app:ci}
Consider the optimality condition (\ref{eqn:iff}). We start from the system of equations given in (\ref{eqn:optcond}), which when solved backwards starting from $x_N^*$ to $x_1^*$ we get the results in (\ref{eqn:xistar}) and (\ref{eqn:xNstar}). It follows that 
\begin{enumerate}
	\item $x_N^*>0 \Rightarrow \frac{c_N-c_{N-1}}{\pi_N-\pi_{N-1}}>0 \Rightarrow c_N > c_{N-1}$,
	\item $x_i^*>0 \Rightarrow \frac{c_i-c_{i-1}}{\pi_i-\pi_{i-1}} > \frac{c_{i+1}-c_{i}}{\pi_{i+1}-\pi_{i}}$,
\end{enumerate}	
 which is identical to (\ref{eqn:alpha}) after rearrangement.

The converse is proved using backward induction. For $i=N$, we have $c_N=c_{N-1}+(\pi_N-\pi_{N-1})F(x_N^*)$. Hence, if $c_N>c_{N-1}$ we have $F(x_N^*)>0\Rightarrow x_N^*>0.$ Assume that the statement holds for all $i> n$. For $i=n$, by rearranging (\ref{eqn:alpha}), we obtain $\frac{c_i-c_{i-1}}{\pi_i - \pi_{i-1}}>\frac{c_{i+1}-c_i}{\pi_{i+1}- \pi_{i}}$, which, by monotonicity of $F$, implies $ F^{-1}\big(\frac{c_i-c_{i-1}}{\pi_i - \pi_{i-1}}\big)>F^{-1}\big(\frac{c_{i+1}-c_i}{\pi_{i+1}- \pi_{i}}\big)$. This further yields $ x_i^*>0$, by \eqref{eqn:xistar}.

\section{Proof of Lemma \ref{lem:mono}}\label{app:mono}
Note $c_{i-1}\leq \alpha_i< s < \beta_i< c_{i+1}$.  We show $\frac{\partial x_i^*}{\partial s}$ is always positive in the interval $(\alpha_i,\beta_i)$, thereby showing monotonicity of the allocation. First, note that $\frac{\partial F^{-1}(z)}{\partial z} = \frac{1}{f(z)}$. Using this, we get
\begin{align*}
\frac{\partial x_i^*}{\partial s}&= \frac{\partial}{\partial s}F^{-1}\bigg(\frac{s-c_{i-1}}{\pi_i-\pi_{i-1}}\bigg)-\frac{\partial}{\partial s}F^{-1}\bigg(\frac{c_{i+1}-s}{\pi_{i+1}-\pi_{i}}\bigg)\\
&=\frac{1}{(\pi_i-\pi_{i-1})}\frac{1}{f\Big(\frac{s-c_{i-1}}{\pi_i-\pi_{i-1}}\Big)}\\
& \qquad\qquad +\frac{1}{(\pi_{i+1}-\pi_{i})}\frac{1}{f\Big(\frac{c_{i+1}-s}{\pi_{i+1}-\pi_{i}}\Big)} >0,
\end{align*}
which follows from the assumption $\pi_{i-1}<\pi_i<\pi_{i+1}$ and $f(z)>0$ for all $z>0$. Now, for $s\leq \alpha_i$, $x_i^* = 0$ by Lemma \ref{lem:ci} and $s\geq \beta_i$, the optimal allocation is infinite (which we ruled out due to Assumption \ref{assm:bids}. The monotonicity of $x_N^*$ follows as a special case of the proof above. The proof of the lemma is thus complete.

% Next, we show that for every $\epsilon>0$ sufficiently small, we have
% \beqq{x_i(c_{i-1}+\epsilon,c_{-i}) - x_i(c_{i-1}-\epsilon,c_{-i})>0.}

\section{Proof of Theorem \ref{thm:DSIC}} \label{app:DSIC}
Note the following identity:
\beqq{\int_a^b F^{-1}(z) dz &= \int_{F^{-1}(a)}^{F^{-1}(b)} F^{-1}(F(w))f(w) dw \\
& =G(F^{-1}(b)) - G(F^{-1}(a)),}
where we changed the variable $z = F(w)$ and used the invertibility of $F$. Using this identity, we carry out the integral equation for $p_i^*$ for $1\leq i\leq N-1$ in Figure \ref{fig:ex}. The derivation for $p_N^*$ is also analogous.

We next prove the incentive compatibility by showing that the utility of each LSE, $i$, has a unique maximum at the true value $c_i$, regardless of what other buyers announce. Fix $c_{-i}$. The utility of the player is
\begin{align*}
U_i(s,c_{-i})=c_ix_i^*(s,c_{-i})-p_i(s,c_{-i})
\end{align*} 
By differentiating the utility with respect to $s$, we get
\begin{equation*}
\frac{dU_i}{ds}=c_i\frac{dx_i^*	}{ds}-s\frac{dx_i^*}{ds},
\end{equation*}
This derivative is zero when $s=c_i$. Thus, $s=c_i$ satisfies the first order necessary for maximization of the utility function. To show that the utility is indeed maximized at this point, we prove that the utility function is increasing on the interval $(0,c_i)$ and decreasing on $(c_i,\beta_i)$. For $s \in (0,c_i)$,
\begin{equation*}
\frac{dU_i}{ds}=(c_i-s)\frac{dx_i^*}{ds}>0,
\end{equation*}
by Lemma \ref{lem:mono} and $c_i>s$. Therefore, $U_i(\cdot,c_{-i})$ is increasing in this interval. On the other hand, for $s \in (c_i,\beta_i)$,
\begin{equation*}
\frac{dU_i}{ds}=(c_i-s)\frac{dx_i^*}{ds}<0,
\end{equation*}
by Lemma \ref{lem:mono} and $c_i<s$. Thus, $U_i(\cdot,c_{-i})$ is decreasing over this region.

Since $i$ was picked arbitrarily, the result holds for all $i$. Thus, we conclude that it is in the best interest for each LSE to be truthful in their bids. Moreover, truthful bidding strategy is always the best response regardless of what other players bid (as long as the constraints on $c_i$ are met to ensure that the solution to \eqref{eqn:stp} is well-defined).

\begin{figure*}[!t]
\begin{align*}
& p_i^*(c_i,c_{-i})-\ex{\pi_iy_i^*(x^*;\omega)}\\
& = c_ix_i^*+(\pi_{i+1}-\pi_{i-1})G\big(F^{-1}(\rho_{cvx_i})\big)-(\pi_{i+1}-\pi_i)G\big(F^{-1}(\rho_{1_i})\big)-(\pi_i-\pi_{i-1})G\big(F^{-1}(\rho_{2_i})\big)\\
&\quad -\pi_ix_i^*F\big(F^{-1}(\rho_{1_i})\big)-\pi_iF^{-1}(\rho_{2_i})\big[F\big(F^{-1}(\rho_{2_i})\big)-F\big(F^{-1}(\rho_{1_i})\big)\big]+\pi_iG\big(F^{-1}(\rho_{2_i})\big)-\pi_iG\big(F^{-1}(\rho_{1_i})\big),\\
% & = (\pi_{i+1}-\pi_{i-1})G(F^{-1}(\rho_{cvx_i}))-\pi_{i+1}G(F^{-1}(\rho_{1_i}))+\pi_{i-1}G(F^{-1}(\rho_{2_i}))\\
% &+(c_i-\pi_i\rho_{1_i})x_i^*-\pi_iF^{-1}(\rho_{2_i})(\rho_{2_i}-\rho_{1_i}),\\
& = (c_i-\pi_i\rho_{1_i})x_i^*+\pi_{i+1}\big[G\big(F^{-1}(\rho_{cvx_i})\big)-G\big(F^{-1}(\rho_{1_i})\big)\big]+\pi_{i-1}\big[G\big(F^{-1}(\rho_{2_i})\big)-G\big(F^{-1}(\rho_{cvx_i})\big)\big]\\
& \quad -\pi_iF^{-1}(\rho_{2_i})(\rho_{2_i}-\rho_{1_i}),\\
& \geq (c_i-\pi_i\rho_{1_i})x_i^*+\pi_{i+1}(\rho_{cvx_i} - \rho_{1_i}) F^{-1}(\rho_{1_i})+\pi_{i-1}(\rho_{2_i} - \rho_{cvx_i})F^{-1}(\rho_{cvx_i})-\pi_iF^{-1}(\rho_{2_i})(\rho_{2_i}-\rho_{1_i})\\
&  =(c_i-\pi_i\rho_{1_i})x_i^* - \pi_i(\rho_{2_i}-\rho_{1_i}) x_i^* + \pi_{i-1}(1-\mu_i)(\rho_{2_i} - \rho_{1_i})(F^{-1}(\rho_{cvx_i}) - F^{-1}(\rho_{1_i})) \\
& = \pi_i\Big(\frac{c_i}{\pi_i} - \rho_{2_i}\Big) x_i^* + \pi_{i-1}\frac{(c_i - \alpha_i)}{\pi_i -\pi_{i-1}}(F^{-1}(\rho_{cvx_i}) - F^{-1}(\rho_{1_i})) .
\end{align*}
\caption{\label{fig:u0bound} See Appendix \ref{app:bound}. Here, the first inequality comes from convexity of $G\circ F^{-1}$.}
\vspace{0.2em}
\line(1,0){505}
\end{figure*}

\section{Proof of Lemma \ref{lem:plv}} \label{app:plv}
In order to prove individual rationality of our mechanism, we need the following lemma.
\begin{lemma}\label{lem:GoF}
$G\circ F^{-1} :[0,1)\rightarrow[0,\infty)$ is a convex monotonically increasing function. 
\end{lemma}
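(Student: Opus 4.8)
The plan is to show the two properties of $G\circ F^{-1}$ directly from the definitions of $G$ and $F$. Recall $G(z) = \int_0^z w f(w)\,dw$ and that $F$ is differentiable with density $f$, where $f(z) = 0$ for $z \le 0$ and $f(z) > 0$ for $z > 0$; hence $F$ is strictly increasing on $(0,\infty)$ and $F^{-1}: [0,1) \to [0,\infty)$ is well-defined and differentiable on $(0,1)$ with $\frac{d}{d\rho}F^{-1}(\rho) = \frac{1}{f(F^{-1}(\rho))}$.

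First I would establish monotonicity. Since $f \ge 0$, the integrand $w f(w) \ge 0$ for $w \ge 0$, so $G$ is nondecreasing on $[0,\infty)$; in fact $G'(z) = z f(z) > 0$ for $z > 0$, so $G$ is strictly increasing on $(0,\infty)$. As $F^{-1}$ is strictly increasing on $[0,1)$ with range in $[0,\infty)$, the composition $G \circ F^{-1}$ is strictly increasing, and it maps $0 \mapsto G(0) = 0$, so it indeed takes values in $[0,\infty)$.

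Next, for convexity I would compute the derivative of $\psi(\rho) := G(F^{-1}(\rho))$ via the chain rule:
\beqq{\psi'(\rho) = G'\big(F^{-1}(\rho)\big)\cdot \frac{d}{d\rho}F^{-1}(\rho) = F^{-1}(\rho) f\big(F^{-1}(\rho)\big)\cdot \frac{1}{f\big(F^{-1}(\rho)\big)} = F^{-1}(\rho).}
Thus $\psi'(\rho) = F^{-1}(\rho)$, which is a monotonically increasing function of $\rho$ on $(0,1)$. A differentiable function with nondecreasing derivative is convex, so $\psi = G\circ F^{-1}$ is convex on $[0,1)$. This completes the proof.

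The only mild subtlety — and the one place I would be careful — is the behavior at the endpoint $\rho = 0$: $F^{-1}(0) = 0$ since $F(0) = 0$ and $F$ is continuous and strictly increasing near $0^+$, so $\psi$ extends continuously to $\rho = 0$ with $\psi(0) = 0$ and right-derivative $0$; there is no obstruction. One could alternatively invoke the general fact that the composition of a convex increasing function with a convex function is convex, but here the slick observation $\psi' = F^{-1}$ makes everything transparent, so I would present that.
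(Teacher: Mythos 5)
Your proof is correct and takes essentially the same route as the paper's: both arguments rest on the key identity $\frac{d}{d\rho}\,G\big(F^{-1}(\rho)\big)=F^{-1}(\rho)$ (the paper obtains it from the change-of-variables identity $G(F^{-1}(x))=\int_0^x F^{-1}(z)\,dz$ and then differentiates a second time, while you obtain it directly by the chain rule and then note the derivative is increasing), so the two proofs differ only cosmetically.
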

\begin{proof}
We have
\beqq{G(F^{-1}(x)) = \int_0^x F^{-1}(z) dz. }
Differentiating the function once yields
\beqq{\frac{d}{dx} G(F^{-1}(x)) = F^{-1}(x).}
Taking another differentiation, we get
\beqq{\frac{d^2}{dx^2} G(F^{-1}(x)) = \frac{d}{dx}F^{-1}(x) = \frac{1}{f(x)}>0.}
This proves the result.
\end{proof}
Using this lemma, we can show that the sum of the three last terms in \eqref{eqn:pi} is negative.

Note that,
\beq{\frac{c_{i+1}-c_{i-1}}{\pi_{i+1}-\pi_{i-1}}& =\frac{\pi_{i+1}-\pi_{i}}{\pi_{i+1}-\pi_{i-1}}\frac{c_{i+1}-c_i}{\pi_{i+1}-\pi_i}
\nonumber\\
& +\frac{\pi_{i}-\pi_{i-1}}{\pi_{i+1}-\pi_{i-1}}\frac{c_i-c_{i-1}}{\pi_i-\pi_{i-1}}.\label{eq:cvx}}
This expression coupled with the result in Lemma \ref{lem:GoF} yields:
\beqq{& (\pi_{i+1}-\pi_{i-1})G\bigg(F^{-1}\bigg(\frac{c_{i+1}-c_{i-1}}{\pi_{i+1}-\pi_{i-1}}\bigg)\bigg)\\
& \leq (\pi_{i+1}-\pi_{i})G\bigg(F^{-1}\bigg(\frac{c_{i+1}-c_i}{\pi_{i+1}-\pi_{i}}\bigg)\bigg)\nonumber\\
 &\quad +(\pi_{i}-\pi_{i-1})G\bigg(F^{-1}\bigg(\frac{c_{i}-c_{i-1}}{\pi_{i}-\pi_{i-1}}\bigg)\bigg).}
Substituting this inequality in \eqref{eqn:pi} implies,
\beqq{p_i^*(c_i,c_{-i}) \leq c_i x_i^*(c_i,c_{-i}),}
for $1\leq i \leq N-1$. This result is also true for $i=N$ since a positive term is subtracted from $c_Nx_N^*(c_N,c_{-N})$ in \eqref{eqn:pN}.

\section{Proof of Lemma \ref{lem:bound}}\label{app:bound}
We break the process of bounding $U_0^*$ into multiple steps. We start by  first evaluating $\ex{y_i^*(x^*;\omega)}$ in the following lemma.
\begin{lemma}\label{lem:Ey}
The expected value of the optimal recourse variable, $y_i^*(x;\omega)$ is given by
\beq{\ex{y_i^*(x;\omega)}&=x_iF\big(\varphi_{i+1}(x)\big)+\varphi_{i}(x)F\big(\varphi_{i}(x)\big)\nonumber\\
&-\varphi_{i}(x)F\big(\varphi_{i+1}(x)\big)-G\big(\varphi_{i}(x)\big)\nonumber\\
&+G\big(\varphi_{i+1}(x)\big).\label{eq:Ey}}
\end{lemma}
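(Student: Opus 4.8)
\textbf{Proof proposal for Lemma \ref{lem:Ey}.}

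The plan is to obtain $\ex{y_i^*(x;\omega)}$ directly from the pointwise expression for $y^*(x;\omega)$ recorded in Theorem \ref{thm:stp}, reusing the three probabilistic identities already established in the proof of Lemma \ref{lem:value}. Concretely, equation \eqref{eqn:Qxw} writes the $i$-th optimal recourse component as
\beqq{y_i^*(x;\omega) = \mathds{1}_{\lbrace k_x(\omega)>i\rbrace}\,x_i + \mathds{1}_{\lbrace k_x(\omega)=i\rbrace}\,\varphi_i(x) - \mathds{1}_{\lbrace k_x(\omega)=i\rbrace}\,w(\omega),}
so that, by linearity of expectation,
\beqq{\ex{y_i^*(x;\omega)} = x_i\,\pr{k_x(\omega)>i} + \varphi_i(x)\,\pr{k_x(\omega)=i} - \ex{w(\omega)\mathds{1}_{\lbrace k_x(\omega)=i\rbrace}}.}

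The second step is to substitute the three evaluations obtained in Appendix \ref{app:value}: $\pr{k_x(\omega)=i} = F(\varphi_i(x)) - F(\varphi_{i+1}(x))$ from \eqref{eq5}, $\pr{k_x(\omega)>i} = F(\varphi_{i+1}(x))$ from \eqref{eq6}, and $\ex{w(\omega)\mathds{1}_{\lbrace k_x(\omega)=i\rbrace}} = G(\varphi_i(x)) - G(\varphi_{i+1}(x))$ from \eqref{eq7}, where $G$ is as in \eqref{eqn:Gz}. Plugging these in and collecting terms yields exactly
\beqq{\ex{y_i^*(x;\omega)} = x_iF(\varphi_{i+1}(x)) + \varphi_i(x)F(\varphi_i(x)) - \varphi_i(x)F(\varphi_{i+1}(x)) - G(\varphi_i(x)) + G(\varphi_{i+1}(x)),}
which is \eqref{eq:Ey}. (Equivalently, one can simply match this term-by-term against the summand of $V(x) = \sum_i \pi_i\,\ex{y_i^*(x;\omega)}$ in \eqref{eqn:value} and read off the coefficient of $\pi_i$.)

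There is essentially no obstacle here: the content is the same bookkeeping already done once in the derivation of $V(x)$, and the only care needed is the boundary convention ($\varphi_{N+1}(x)=0$, $F(0)=0$, and empty sums equal zero) so that the formula remains valid for $i=N$. I would therefore keep the write-up to the two displays above, citing \eqref{eq5}, \eqref{eq6}, \eqref{eq7} for the substitutions.
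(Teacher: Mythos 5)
Your proposal is correct and is essentially the paper's own argument: the paper writes $y_i^*(x;\omega)$ piecewise in $w(\omega)$ over the events $\{w<\varphi_{i+1}(x)\}$ and $\{\varphi_{i+1}(x)\le w<\varphi_i(x)\}$ (i.e.\ $\{k_x(\omega)>i\}$ and $\{k_x(\omega)=i\}$) and integrates against $f$, which is the same computation you perform by taking expectations of the indicator decomposition and substituting \eqref{eq5}--\eqref{eq7}. Your handling of the boundary convention $\varphi_{N+1}(x)=0$ matches the paper as well, so no changes are needed.
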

\begin{proof}
By combining \eqref{eq:yistar} and \eqref{eqn:kxomega} we get 
\begin{equation*} 
y_i^*(x;\omega)=  \begin{cases} 
x_i &   w(\omega)< \varphi_{i+1}(x) \\
\varphi_{i}(x)-w(\omega) & \varphi_{i+1}(x) \leq w(\omega) < \varphi_{i}(x) \\
0 & \text{else}
\end{cases},
\end{equation*}
for all $i$, where $\varphi_{N+1}(x)=0$. The expectation with respect to the generation can be computed as follows:
\beqq{\ex{y_i^*(x;\omega)}&=\int_{0}^{\varphi_{i+1}(x)}x_if(\omega)d\omega\\
&+\int_{\varphi_{i+1}(x)}^{\varphi_{i}(x)}(\varphi_{i}(x)-w(\omega))f(\omega)d\omega\\
&=x_iF(\varphi_{i+1}(x))+\varphi_{i}(x)\Big[F(\varphi_{i}(x))\\
&-F(\varphi_{i+1}(x))\Big]-G(\varphi_{i}(x))+G(\varphi_{i+1}(x)).}
The proof of this lemma is complete.
\end{proof}
Before stating our bound for $p_i^*-\pi_i^*\ex{y_i^*(x^*;\omega)}$, we define the variables, $\rho_{1},~\rho_{cvx},~\rho_2$ and $\mu_i$ in the following set of equations,
\beq{\rho_{cvx_i}& =\frac{c_{i+1}-c_{i-1}}{\pi_{i+1}-\pi_{i-1}},~ \rho_{1_i} =\frac{c_{i+1}-c_i}{\pi_{i+1}-\pi_i},\nonumber\\
 \rho_{2_i}& =\frac{c_i-c_{i-1}}{\pi_i-\pi_{i-1}}, 
\quad~\mu_i =\frac{\pi_{i}-\pi_{i-1}}{\pi_{i+1}-\pi_{i-1}}.\label{eq:rhos}} 
where $\rho_{cvx_i}= (1-\mu_i)\rho_{1_i}+\mu_i\rho_{2_i}$, by \eqref{eq:cvx}. We have the following facts:
\begin{align*}
\begin{split}
(\rho_{cvx_i} - \rho_{1_i}) &= \mu_i (\rho_{2_i}-\rho_{1_i})\\
(\rho_{2_i} - \rho_{cvx_i}) &= (1-\mu_i) (\rho_{2_i}-\rho_{1_i})\\
& = \frac{(c_i - \alpha_i)}{\pi_i -\pi_{i-1}}\\
 \pi_i & = \pi_{i+1}\mu_i +\pi_{i-1}(1-\mu_i)
 \end{split}
\end{align*}
which yields
\beq{\label{eqn:facts}\pi_{i+1}(\rho_{cvx_i} - \rho_{1_i})+\pi_{i-1}(\rho_{2_i} - \rho_{cvx_i}) = \pi_i(\rho_{2_i}-\rho_{1_i}).}

For the $N$th LSE, when $F(x)$ is convex over $(0,x^*_N)$, then $F^{-1}(\rho)$ is concave over $(0,\rho_{2_N})$ and the area under $F^{-1}(\rho)$ over this interval can be bounded by the area of the right triangle connecting the points $(0,0), (0,\rho_{2_N}), \text{and} (\rho_{2_N},x_N^*)$. Therefore, by \eqref{eqn:pN} and \eqref{eq:Ey}, we get
\beqq{&p_N^*(c_N,c_{-N})-\ex{\pi_Ny_N^*(x^*;\omega)}=c_Nx^*_N+\pi_{N-1}G(x_N^*)\\
&-\pi_N\frac{c_{N}-c_{N-1}}{\pi_{N}-\pi_{N-1}}x^*_N,\\
&\geq c_Nx^*_N+\frac{1}{2}\pi_{N-1}\frac{c_{N}-c_{N-1}}{\pi_{N}-\pi_{N-1}}x_N^*-\pi_N\frac{c_{N}-c_{N-1}}{\pi_{N}-\pi_{N-1}}x_N^*,\\
& = \pi_N\left(\frac{c_N}{\pi_N} -  \frac{c_{N}-c_{N-1}}{\pi_{N}-\pi_{N-1}} \right)x^*_N+\frac{1}{2}\pi_{N-1}\frac{c_{N}-c_{N-1}}{\pi_{N}-\pi_{N-1}}x_N^*,\\
&=\Big[c_{N-1}\frac{\pi_N}{\pi_N-\pi_{N-1}}-\frac{c_N+c_{N-1}}{2}\frac{\pi_{N-1}}{\pi_N-\pi_{N-1}}\Big]x_N^*.}
The proof of the lemma is complete.
\section{Proof of Lemma \ref{lem:pp}}\label{app:pp}
First, we note that since $c_i>\alpha_i$, we have
\beqq{\frac{c_{i+1}-c_{i-1}}{\pi_{i+1}-\pi_{i-1}} - \frac{c_{i+1}-c_i}{\pi_{i+1}-\pi_i}>0.}
Coupled with the fact that $F^{-1}$ is monotonically increasing function, we conclude that
\beqq{\bigg[F^{-1}\Big(\frac{c_{i+1}-c_{i-1}}{\pi_{i+1}-\pi_{i-1}}\Big)-F^{-1}\Big(\frac{c_{i+1}-c_i}{\pi_{i+1}-\pi_i}\Big)\bigg]>0.}
Now, if $c_i/\pi_i < c_{i-1}/\pi_{i-1}$, then 
\beqq{\frac{c_i}{\pi_i}>\frac{c_i - c_{i-1}}{\pi_i - \pi_{i-1}}\implies c_i-\pi_iF(\varphi_i(x^*)>0,}
where we used \eqref{eqn:optcond}. From Lemma \ref{lem:bound}, we conclude that the lower bound is positive. This further implies that $U_0^*>0$, which completes the proof of the lemma.

\bibliographystyle{ieeetr}
\bibliography{references}

\end{document}